\newtheorem{theorem}{Theorem}
\newtheorem{lemma}{Lemma}
\newtheorem{corollary}{Corollary}
\newtheorem{definition}{Definition}
\title{Analysis of Power-aware Buffering Schemes in Wireless Sensor Networks}
\author{Yibei~Ling, Chung-Min~Chen \\
Applied Research, Telcordia Technologies \\
Shigang~Chen \\
Department of Computer \& Information of Science \& Engineering, University of Florida
}
\begin{abstract}
We study the power-aware buffering problem in 
battery-powered sensor networks, 
focusing on the fixed-size and fixed-interval buffering schemes. 
The main motivation is to address
the yet poorly understood size variation-induced effect on
power-aware buffering schemes.
Our theoretical analysis elucidates the
fundamental differences between the fixed-size and 
fixed-interval buffering schemes in the presence of data size
variation. It shows that data size variation has
detrimental effects on the power expenditure of the fixed-size
buffering in general, and reveals that 
the size variation induced effects can be 
either mitigated
by a positive skewness or promoted by a negative skewness in size
distribution. By contrast, the
fixed-interval buffering scheme has an obvious 
advantage of being eminently immune to 
the data-size variation. Hence the fixed-interval buffering scheme is
a risk-averse strategy for 
its robustness in a
variety of operational environments. 
In addition, based on the 
fixed-interval buffering scheme, 
we establish the power consumption
relationship  
between child nodes and parent node in a static 
data collection tree, and give an in-depth analysis of the impact of 
child bandwidth distribution 
on parent's power consumption.

This study is of practical significance: it sheds new light on the
relationship among power consumption of buffering schemes, power
parameters of radio module and memory bank, data arrival rate and
data size variation, thereby providing well-informed guidance in determining an
optimal buffer size (interval) to maximize the operational
lifespan of sensor networks.
\end{abstract}
\keywords{Power-Aware Buffering Schemes}
\begin{document}
\begin{bottomstuff}
Authors'addresses: Yibei Ling and Chung-min Chen, Applied Research,
Telcordia Technologies, 1 Telcordia Dr., Piscataway, NJ, 08854 
94301.\newline
Shigang Chen, Department of Computer \& Information of Science \& Engineering,
University of Florida, Gainesville, FL 32611.
\end{bottomstuff}
\maketitle
\section{Introduction}
A dramatic rise in research interest in power-conscious computing is
attributed, in part, to the growing awareness of the greenhouse
effect brought about by exponentially increasing number of computing
devices [\citeNP{Xie2008};\citeNP{Satyanarayanan1996}]. It is also driven by the impetus
to meet the long-duration operational requirement of battery-powered
sensor networks [\citeNP{Mainwaring2002};\citeNP{Woo2003};\citeNP{Culler2004};\citeNP{Gupta2003}]. This work is
motivated by problems arising from power-aware computing in general
and by battery-based sensor networks in particular.

A sensor network could be comprised of hundreds to thousands of tiny
sensor nodes. Each sensor node typically comprises a couple of
sensors, memory banks, a radio, and a microcontroller
[\citeNP{Hempstead2005}], being equipped with a stripped-down
version of the operating system. The sensor node can perform 
some basic computational tasks such as data measurement, filtering, 
aggregation, transmission/reception, and packet routing. Once deployed in the field, sensor nodes can
self-organize into a perceptive network that enables novel ways to
respond to emergencies, habitat monitoring and around-clock
environmental surveillance. 
The sensor nodes are required to autonomously operate under
harsh conditions for several months, even years, without
human intervention and maintenance [\citeNP{Mainwaring2002}]. In certain
cases, battery replacement or recharge may not even be possible
[\citeNP{Mainwaring2002};\citeNP{Landman1995}].
Thus the
premise of sensor networks to detect rarely-occurring events or to
monitor chronically changing events largely depends on the
lifespan of the sensor network. 
A review of essential features required by
sensor-based network applications yields a long list:
resilience, fault-tolerance, self-organization, and autonomy.
Despite such a rich feature set, the core requirement of
the sensor network is power conservation.

In a drive to bring power-aware computing to fruition, research
efforts have proceeded along three distinct yet closely related
tracks: 1) battery technologies;
2) hardware-based technologies; and
3) software-based technologies. Among these technologies, battery
technologies appear to be self-contained. Hardware-based and
software-based technologies are sharply distinct but mutually
dependent as well.

The power conservation requirement fundamentally
reshapes how hardware modules should be designed,
implemented and assembled. 
The evolution of the hardware-based approach
is a process of continuously replacing power-inefficient components with
ultra-low power modules, and substituting general-purpose
components with specially designed power efficient ones.
Wireless radio and memory components 
have long been recognized as the
biggest power spenders in a sensor node system
[\citeNP{MinLee2007}].  
Realization of
this shortcoming has directed research attention toward designing
ultra-low power radio and 
memory components with
multi-power mode capability [\citeNP{MinLee2007};\citeNP{Flautner2002}].
However, the hardware multi-power mode capability
alone does not warrant power efficient computing in practice.
The reason is that a transition between operating power modes (from a
low-power mode to a high-power mode or vice versa) bears a {\em
resynchronization cost}, {\em i.e.}, a certain amount of 
energy
incurred to demote or to elevate the operating power level. As a
result, the availability of hardware multi-power mode capability presents a new set of
collateral risks of being misused: a blind choice of operating
power mode might incur an excessive transition cost that
neutralizes the benefits brought out by power-aware hardware
design.

To reap the benefit of multi-power mode feature in a hardware
design, software-based technologies are concerned with the
design of algorithms/protocols
that can exploit the
multi-power capability, thus serving as a reinforcer to the
hardware-based power-aware technologies. The whole idea underlying
the software-based approaches centers around the exploitation of
quiescence in workload, linking the power mode of
a component to its workload characteristics.

\citeN{MinLee2007} introduced a power-aware
buffer cache management scheme called {\em PABC} for compressing and
migrating 
active pages in both user-space and kernel-space onto a
few memory units. Their experimental study indicated that the 
{\em PABC}
scheme can reduce the energy consumption of the buffer cache
by an impressive $63\%$. 
\citeN{Flautner2002} observed that in practice
the hot (active) cache only 
accounts for a small subset of 
on-chip caches for most of time. This observation leads to
an architectural design that 
exploits such a workload pattern
to place the cold cache into drowsy mode, 
thereby saving a substantial power consumption. 
The experimental studies showed that 
about the $80\%-90\%$ of cache 
can be maintained in a drowsy (idle) mode
without affecting performance by more than
$1\%$. \citeN{Ling2007}, on the other
hand, derived closed form optimal buffering strategies, under the
condition that the received data size is entirely devoid of
variability and identical to the size of a memory bank. This
assumption greatly simplifies mathematical derivation. It, 
however, appears to be inadequate in capturing the
essence of sensor networks in a realistic setting. 

It is widely recognized that 
idle listening is the major energy spender
in senor networks. For example, 
experimental study shows that
$99\%$ of energy is dissipated on idle listening 
if a node is always turned on [\citeNP{Lin2005};\citeNP{Shnayder2004}].
Many power management 
protocols are proposed to reduce power consumption
in listening.  
Asynchronous low power 
listening (ALPL) scheme 
uses duty cycling to reduce the listening energy.
A node is required to periodically wake up and check the radio channel.  
In general, the energy saving on listening
at receiver is at the expense of sender,
because the sender must open
the radio channel long enough to ensure 
correct message reception. 
Synchronous Low Power Listening scheme (SLPL) 
improves on the ALPL scheme in its ability to
coordinate sender's transmit mode
with the receiver's periodic check [\citeNP{Ye2002};\citeNP{Jurdak2007}].
The weakness of SLPL is that it demands a 
high-quality time synchronization 
among a group of nodes, which incurs a non-negligible 
amount of energy. In addition, 
the design of an energy efficient
wake-up/sleep protocol 
is often application dependent and complicated
in practice. 
Hence, it is hard to design a general power 
management system based on 
wake-up/sleep scheduling. 

A radically different approach, called radio-triggered wake-up 
power management, is proposed 
by [\citeNP{Lin2004};\citeNP{Lin2005};\citeNP{Ansari2009}]. 
It uses a radio-triggered circuit as one interrupt input of the processor. 
The circuit itself does not require any power supply
and is powered by the radio signals themselves. As a result,
the radio-triggered power 
scheme allows nodes to sleep without need for
periodic wake-up to check channel signals, 
thereby completely eliminating listening power consumption. 

In this paper we study the 
power-aware buffering problem by exploiting
the multi-power mode in radio and memory components
and the radio-triggered power 
management. The optimization
objective is to minimize power consumption in the context of two
buffering paradigms: 
the well-known {\it fixed-size} and the 
lesser-known {\it
fixed-interval} buffering schemes. 
In particular, we 
focus on the size variability-induced effect on 
these power-aware buffering schemes.

To the best of our knowledge, the effect of size variability on 
power consumption of buffering schemes has not been addressed before.
Our analysis provides insight into 
the poorly understood effect of size variability on the 
power-aware buffering schemes,
thereby providing a theoretical guidance for performance tuning in practice. 
The novelty of this paper is its adoption
of asymptotic analysis, which allows us to model the limitation of
power-aware buffering schemes without sacrificing simplicity and
elegance.

The remainder of this paper is organized as follows: Section~2
presents relevant definitions and
prerequisite theorems that facilitate derivation of the main
theorems. Section~3 presents the exposition of
theoretical analysis for both power-aware fixed-size and
fixed-interval buffering schemes. 
Section~4 compares the performance between the fixed-size and fixed-interval
buffering schemes in both the absence and presence of size variation.
Section~5 discusses the gain of power-aware buffering schemes over
power-oblivious ones in terms of power conservation,
with some examples to illustrate the effect of power-aware
buffering on the lifespan of sensor nodes.
the power consumption relationship between the parent
and child nodes in a data collection tree is presented.
Section~6 concludes this paper.

\section{Background}
Multi-power mode radio and memory components are the main hardware
prerequisites of power-aware buffering schemes in this paper. 
The efficacy of power-aware hardware
design relies on the ability of software-based approach to exploit
the potential of power-aware hardware design. 

\begin{figure}[htb]
\centerline{\psfig{file=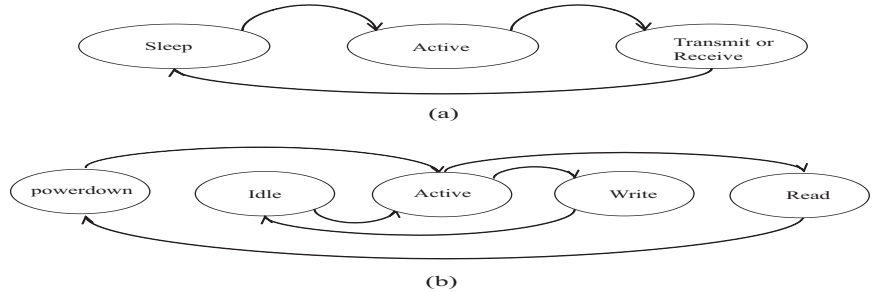,height=1.5in,width=4.5in}}
\caption{(a) Power state transition diagram of wireless radio module
(b) Power state transition diagram of memory bank } 
\label{fig:flow}
\end{figure}

To study the performance of power-aware buffering schemes, let's
discuss at length the power-mode transition pattern of radio and memory
components. We assume that nodes 
use the radio-triggered power management, 
thus do not 
incur listening power consumption. 

The power-mode of a multi-power radio component 
can be subsumed into 1) 
the sleep mode and 2) the active mode.
A sleep-mode radio inhibits data 
transmission/reception. 
An active-mode radio permits 
data transmission/reception but incurs
more power than when in sleep mode. 
To save power consumption,  
the radio is placed 
into sleep mode most of the time; 
it is only elevated to active mode 
(by a radio-triggered wakeup component)
when data transmission/reception
is needed. After completing data
transmission/reception, 
the radio is put back to sleep mode. 
The {\em sleep-active-transmit-sleep} 
transition diagram in
Figure\,\ref{fig:flow}(a) 
forms a typical power-aware radio working pattern.
 
A memory bank
refers to {\em the
minimum size of a memory unit whose power mode can be
independently altered} [\citeNP{Hempstead2005}].
Its power mode could be broadly classified into 
three categories: 1) the powerdown mode; 2) the idle mode;
and 3) the active mode.
A powerdown-mode memory bank means that 
the voltage supply to the memory bank is cut off,
resulting in a sizable reduction in 
current leakage [\citeNP{Flautner2002};\citeNP{Tarjan2006}]. 
The idle (sleep or drowsy) mode is 
the minimum power mode that 
preserves the 
stored information but inhibits 
writing and reading of data.   
An idle-mode or powerdown-mode memory bank 
must be reinstated to the active mode before a 
read/write operation can be performed. 
An active mode memory bank not only 
retains the stored information but also 
allows the data to be written/read.
The power consumption in a powerdown-mode memory bank 
is negligibly small. An idle-mode memory bank consumes less power
and has less functionality than
an active-mode memory bank.   

\begin{figure}[htb]
\centerline{\psfig{file=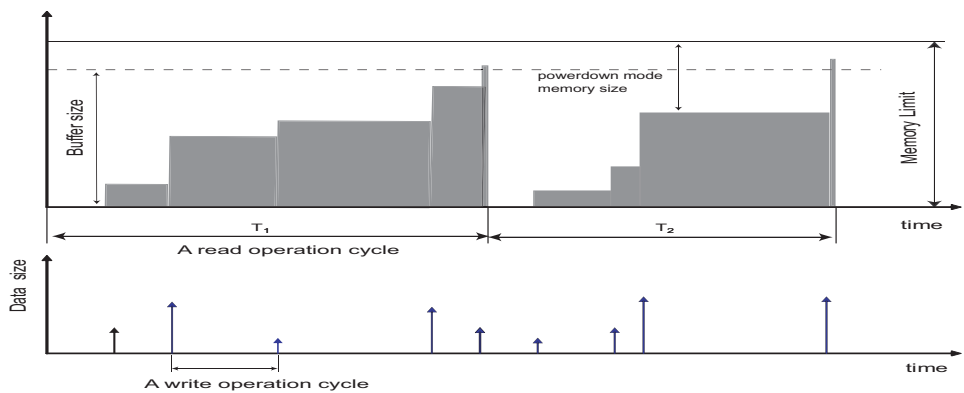,height=2in,width=5in}}
\caption{Evolution of powerdown-mode memory size (blue) and 
idle-mode memory size (red)} 
\label{fig:fl1}
\end{figure}

In an ideal power-saving data buffering scenario, 
the power-mode transition could be divided into 
1) the write power-mode transition and 2) 
the read power-mode transition. 
A write power-mode transition is initiated 
by an interrupt of radio-triggered circuit. 
The sensor node first powers up the memory banks from the powerdown mode to  
the active mode, and then writes data into the memory banks. 
After completing a {\em write}
operation, the involved memory banks are demoted to 
the idle mode to preserve power 
while retaining the stored information. 
The {\em powerdown-active-write-idle} power mode transition
in Figure\,\ref{fig:flow}(b) forms a {\em write} operation cycle. 

A read power-mode transition 
is initiated when a specified buffer
size or interval threshold is reached. 
Thus, a {\em read} power-mode transition may
comprise more than one {\em write} power-mode transition
cycle, depending on the specified buffer size. 
It involves loading,
transmitting and clearing up all the buffered data. 
To do so, it elevates the power mode of the buffered memory banks 
from idle mode to active mode. 
Once reading of data is completed,
the corresponding memory banks are put back to 
powerdown mode.
The {\em idle-active-read-powerdown} 
power mode transition
in Figure\,\ref{fig:flow}(b) forms a {\em read} 
operation cycle,
which is synchronized with the {\em sleep-active-transmit-sleep}
power mode transition in the radio. In other words, 
data transmission in the radio
is initiated immediately right after
reading/loading buffered data from the memory banks. 

The bottom graph in Figure\,\ref{fig:fl1} depicts the size of 
of arrival data (collected via sensors) as a function of time. The top graph shows the
evolution of powerdown-mode memory size (total memory size minus stair height) 
and of idle-mode memory size (stair height). Observe that   
the powerdown-mode 
memory size 
shrinks as the arrival data are 
accumulated in idle-mode memory banks.
The size of idle-mode 
memory banks grows in a stair-like fashion when it is less than
the prescribed buffer size. Once it hits the prescribed 
buffer size and a transmission of buffered data is initiated.  
This forms a {\em read} power-mode transition cycle for memory banks, 
as well as a radio power-mode transition cycle. In practice, 
the duration of the {\it
read} transition cycle may fluctuate widely: it 
could be very sensitive to 
the buffering policy, data arrival 
rate and data size distribution.
In order to analyze the power-aware buffering issues, we begin with two buffering policies as follows:
\begin{definition} \label{def:define1}
A buffering policy is said to be {\em stationary} if its decision
depends only on its current state and not on the time. A buffering
policy is said to be deterministic
if its decision is a {\em deterministic} function of the current state.
\end{definition}
\begin{enumerate}
\item Fixed-size buffering scheme: data transmission is initiated
immediately when a fixed (prescribed) buffer size is reached.
\item Fixed-interval buffering scheme: data transmission is commenced
periodically with a fixed time interval.
\end{enumerate}
 
There is a clear distinction between the well-known
fixed-size and lesser-known fixed-interval buffering schemes: 
the threshold of the {\it
fixed-size} buffering scheme depends on the 
size of the buffered data,
and that of the {\it fixed-interval} one 
depends on a specified 
time interval. 
By definition~\ref{def:define1}, the {\it fixed-size}
buffering scheme is stationary 
while the {\it fixed-interval} 
buffering
scheme is deterministic. For notational convenience, 
we use
the superscripts $FS$ and
$FI$ to denote the 
fixed-size and fixed-interval buffering schemes throughout
the paper.
Before dwelling into a detailed derivation, we 
introduce relevant notions and
essential prerequisites.
\begin{definition} \label{def:skewness}
Let $x$ be a random variable following a probability distribution
${\cal F}$, {\it i.e.}, $x\!\sim\!{\cal F}$, the skewness of 
$x$, denoted by $\gamma(x)$, is defined as
\begin{align} \label{equ:skewness}
\gamma(x)
=\dfrac{E[(x-\mu_x)^3]}{(E[x-\mu_x)^2])^{3/2}}
=\dfrac{\mu_3}{\sigma_x^3}.
\end{align}
\noindent The coefficient of variation 
of $x$, denoted by $c_v(x)$,
is defined as
\begin{align} \label{equ:vc}
c_v(x) =\dfrac{\sqrt{E[(x-\mu_x)^2]}} {E[x]}
=\dfrac{\sigma_x}{\mu_x},
\end{align}
where $E[]$ is the expected function and $\mu_x\!=\!E[x]$. 
\end{definition}
In probability theory, $\gamma(x)$ is 
the third standardized moment for measuring the
degree of asymmetry. It can be
further divided into positive ($\gamma(x)>0$) and negative skewness ($\gamma(x)<0$). 
The function $c_v(x)$
is a measure for the degree of dispersion. 

\begin{definition}
An integer-valued random variable $\mathbf{n}$ is said to be a {\em
stopping time} for the sequence $x_1,x_2,\cdots$ if the event
$\{\mathbf{n}=n\}$ is independent of $x_{n+1},x_{n+2},\cdots$ for
all $n=1,2,\cdots$.  
\end{definition}

\noindent Wald's equation:
Suppose {\em $y_1, y_2 \cdots$ are iid random variable with
finite expectation $E[y_i]\!=\!\mu_y$, and $\mathbf{n}$ is a
stopping time for $y_1,y_2\cdots$ such that
$E[\mathbf{n}]<\infty$, then
\begin{align} \label{equ:wald}
E[\sum_{i=1}^{\mathbf{n}} y_i]= 
E[\mathbf{n}]E[y]\,=\,\mu_y E[\mathbf{n}]
\end{align}}

The following theorem establishes the asymptotic behavior of
stopping time variance {\em w.r.t} buffer size $b$.  

\begin{theorem} \label{the:theosimple}
Let $\{x_i>0, i \geq 1 \}$ be a random positive walk (increment)
with mean of $\mu_x\!=\!E[x_i]\!>\!0$ and finite variance of $\sigma_x^2$.
Let stopping time $\tau(b) = \min \{ n \geq 1:
\sum^{n}_1 x_i > b\}$. When $b$ is sufficiently large,
the stopping time variance $\sigma_{\tau_b}^2$ becomes
\begin{align} \label{equ:simplekey}
\sigma_{\tau_b}^2 =
\dfrac{b\sigma_y^2}{\mu_y^3} + k^*
=\dfrac{b c_v^2(y)}{\mu_y} + k^*
\end{align}
where $k^*$ is expressed as
\begin{align} \label{equ:xkeysim}
k^*  =
\dfrac{5 c^4_v(y)}{4}
+\dfrac{1}{12}-\dfrac{2 c_v^3(x) \gamma(y)}{3},
\end{align}
where $c_v(y)$ denotes the coefficient of variation,
and $\gamma(y)$ the skewness of $y$. 
\end{theorem}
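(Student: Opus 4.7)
The plan is to treat $\tau_b$ as the first-passage time of the positive random walk $S_n=\sum_{i=1}^n x_i$ above level $b$ and bring renewal-theoretic machinery to bear. Let $R_b:=S_{\tau_b}-b\ge 0$ denote the overshoot. Two identities drive the analysis: Wald's identity as stated in~\eqref{equ:wald}, and Wald's second (variance) identity for iid summands with finite variance, which asserts $E\bigl[(S_{\tau_b}-\mu_x\tau_b)^2\bigr]=\sigma_x^2\,E[\tau_b]$. Both apply because $\tau_b$ has finite expectation for every $b$ (Lorden's inequality gives $E[\tau_b]\le b/\mu_x+O(1)$), which also justifies all term-by-term passages to the limit below.

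\textbf{Leading-order term.} Wald's identity rearranges to $E[\tau_b]=(b+E[R_b])/\mu_x$. Setting $W:=S_{\tau_b}-\mu_x\tau_b$, one has $E[W]=0$ and $E[W^2]=\sigma_x^2\,E[\tau_b]$. Since $\mu_x\tau_b=b+R_b-W$, taking variances on both sides yields
\begin{equation*}
\mu_x^2\,\sigma_{\tau_b}^2 \;=\; \mathrm{Var}(R_b)\;-\;2\,\mathrm{Cov}(R_b,W)\;+\;\sigma_x^2\,E[\tau_b].
\end{equation*}
Inserting $E[\tau_b]\sim b/\mu_x$ produces the leading contribution $b\sigma_x^2/\mu_x^3=b\,c_v^2(x)/\mu_x$ claimed in~\eqref{equ:simplekey}; the remaining contributions are $O(1)$ and must be reassembled into $k^*$.

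\textbf{Evaluating $k^*$.} For the constant I would invoke the standard renewal-theoretic fact that as $b\to\infty$ the overshoot $R_b$ converges in distribution to the equilibrium (stationary residual-life) law with density $(1-F_x(y))/\mu_x$, giving $E[R_\infty]=E[x^2]/(2\mu_x)$ and $E[R_\infty^2]=E[x^3]/(3\mu_x)$. Re-expressing the raw moments as $E[x^2]=\mu_x^2(1+c_v^2)$ and $E[x^3]=\mu_x^3\bigl(1+3c_v^2+\gamma\,c_v^3\bigr)$, and combining with the refined renewal expansion $E[\tau_b]=b/\mu_x+E[R_\infty]/\mu_x+o(1)$, all dependence on $\mu_x$ should cancel and leave the three pieces $\tfrac{5}{4}c_v^4$, $\tfrac{1}{12}$, and $-\tfrac{2}{3}c_v^3\gamma$ after simplification of the variance identity above.

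\textbf{Main obstacle.} The fragile piece is the cross term $\mathrm{Cov}(R_b,W)$: the overshoot and the centered sum are coupled through the terminal increment $x_{\tau_b}$, so they are not independent. I would decompose $W=\bigl(S_{\tau_b-1}-\mu_x(\tau_b-1)\bigr)+(x_{\tau_b}-\mu_x)$, noting that the pre-terminal part is a stopped martingale with $S_{\tau_b-1}\le b$ and so contributes a controllable $O(1)$ correction via optional stopping, while the coupling with $R_b=x_{\tau_b}-(b-S_{\tau_b-1})$ is confined to the terminal step and can be evaluated in the limit via the joint equilibrium law of $(R_b,\,b-S_{\tau_b-1})$. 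Keeping the arithmetic aligned so that the dimensionless $\tfrac{1}{12}$ summand emerges in exactly that form, rather than being absorbed into the $c_v$ and $\gamma$ combinations, is where I expect most of the effort to go; everything else reduces to substitution from renewal theory.
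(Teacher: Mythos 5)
Your setup is sound and your leading-order term is right, but the proof has a genuine gap exactly where the theorem's content lives: the constant $k^*$. The paper does not derive $k^*$ from first principles at all; its Appendix quotes Lai and Siegmund's expansion $\sigma^2_{\tau(b)} = b\sigma_x^2/\mu_x^3 + k/\mu_x^2 + o(1)$, where $k$ is already given in closed form through the ladder-height moments $EH, EH^2, EH^3$, the minimum $EM$, and an overshoot integral, and then observes that for a \emph{positive} walk the first ladder height is the increment itself ($H=x$), $M=\min_n S_n=0$ so $EM=0$ and the integral vanishes, whence $k$ collapses by direct moment substitution to $\tfrac{5}{4}c_v^4+\tfrac{1}{12}-\tfrac{2}{3}c_v^3\gamma$ (up to the normalization by $\mu_x^2$). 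You instead set out to rebuild that constant from the identity $\mu_x^2\sigma^2_{\tau_b}=\mathrm{Var}(R_b)-2\,\mathrm{Cov}(R_b,W)+\sigma_x^2 E[\tau_b]$, which is correct as an exact identity, but you never evaluate $\mathrm{Cov}(R_b,W)$ — and that term is precisely where the $-\tfrac{2}{3}c_v^3\gamma$ contribution and part of the $\tfrac{5}{4}c_v^4$ and $\tfrac{1}{12}$ come from. Declaring it "confined to the terminal step" is not right: $W=S_{\tau_b}-\mu_x\tau_b$ depends on the whole path through $\tau_b$, and the pre-terminal piece $S_{\tau_b-1}-\mu_x(\tau_b-1)$ is not a stopped martingale ($\tau_b-1$ is not a stopping time), so optional stopping does not dispose of it.

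A second, more basic hole: your claim that "the remaining contributions are $O(1)$" is asserted rather than proved. Cauchy--Schwarz gives only $|\mathrm{Cov}(R_b,W)|\le\sqrt{\mathrm{Var}(R_b)\,\sigma_x^2 E[\tau_b]}=O(\sqrt{b})$, which is large enough to destroy the constant term (and even to contribute at an intermediate order). Showing that this covariance actually converges requires establishing asymptotic independence of the overshoot from the bulk of the walk together with uniform integrability — this is the technical core of the Lai--Siegmund/Keener analysis that the paper imports wholesale. If you want a self-contained argument along your lines, you must carry out that computation; otherwise the cleaner route is the paper's: cite the general expansion and exploit the positivity of the increments, which trivializes the ladder structure ($H=x$, $EM=0$, $P(M\le -x)=0$ for $x>0$) and reduces the whole proof to substituting $EH^2=\mu_x^2(1+c_v^2)$ and $EH^3=\mu_x^3(1+3c_v^2+\gamma c_v^3)$ into the known formula for $k$.
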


Proof of theorem~\ref{the:theosimple} is given in the
Appendix. 

Theorem~\ref{the:theosimple} states that
in an asymptotic sense, the stopping time variance is
linearly proportional to the buffer size $b$, with a proportionality
constant of $c^2_v(y)/\mu_y$ and the intercept $k^*$ 
determined by both $c_v(x)$ and $\gamma(y)$.
It means that $c^2_v(y)/\mu_y$ could play a central role in
determining the stopping time variance. The
magnitude of intercept ($k^*$) can be either mitigated by positive
skewness or augmented by negative skewness.
It is noteworthy that Theorem~\ref{the:theosimple} is a special
case of Lau's theorem [\citeNP{Lai1977};\citeyearNP{Lai1979}] 
under the positive random increment condition,
which results in a substantial simplification. The following
corollaries are  special cases of Theorem~\ref{the:theosimple} in
which the random walk (increment) is assumed to be exponentially or
Erlangly distributed. 

\begin{corollary} \label{coro:coro1}
For a given buffer size $b$, 
the stopping time variance $\tau(b)$
for an exponential random walk with 
mean $1/\lambda_{e}$ is
\begin{align} \label{equ:expo}
\sigma^2_{\tau(b)}(exp) = \lambda_{e} b,
\end{align}
where $\sigma_{\tau(b)}(exp)$ refers to the stopping time 
variance
{\em w.r.t.} an exponential random walk (increment). 
\end{corollary}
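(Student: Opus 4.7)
The plan is to derive Corollary~\ref{coro:coro1} as a direct specialization of Theorem~\ref{the:theosimple}, by computing the moments of the exponential increment and showing that the constant term $k^{*}$ vanishes identically. Concretely, I would let $y\sim\mathrm{Exp}(\lambda_{e})$ with density $\lambda_{e}e^{-\lambda_{e}t}$ on $t\ge 0$, so that $\mu_{y}=1/\lambda_{e}$, $\sigma_{y}^{2}=1/\lambda_{e}^{2}$, and the third central moment is $2/\lambda_{e}^{3}$. This immediately gives $c_{v}(y)=\sigma_{y}/\mu_{y}=1$ and $\gamma(y)=\mu_{3}/\sigma_{y}^{3}=2$, which are the two shape statistics that feed into Theorem~\ref{the:theosimple}.

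Next I would substitute these values into (\ref{equ:simplekey}) and (\ref{equ:xkeysim}). The leading term reduces cleanly to
\begin{equation*}
\frac{b\, c_{v}^{2}(y)}{\mu_{y}} \;=\; \frac{b\cdot 1}{1/\lambda_{e}} \;=\; \lambda_{e} b,
\end{equation*}
which already matches the right-hand side of (\ref{equ:expo}). The remaining work is to check that the intercept $k^{*}$ is exactly zero in the exponential case; plugging $c_{v}(y)=c_{v}(x)=1$ and $\gamma(y)=2$ into (\ref{equ:xkeysim}) yields $\tfrac{5}{4}+\tfrac{1}{12}-\tfrac{4}{3}$, which I would simplify over a common denominator of $12$ to obtain $\tfrac{15+1-16}{12}=0$. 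Combining the two pieces gives $\sigma_{\tau(b)}^{2}(\mathrm{exp})=\lambda_{e}b$, as claimed.

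There is essentially no obstacle here beyond bookkeeping; the main thing to be careful about is the convention used for the skewness sign and for $\mu_{y}$ versus $\lambda_{e}$, since a sign error in $\gamma(y)$ or a reciprocation slip would produce a nonzero intercept and break the identity. As a sanity check that I would include as a brief remark (rather than an alternative proof), the result also follows exactly, not merely asymptotically, from the fact that with exponential inter-arrival times the count $\tau(b)-1$ equals a Poisson$(\lambda_{e}b)$ random variable, whose variance is $\lambda_{e}b$; this exact identity is consistent with Theorem~\ref{the:theosimple} precisely because $k^{*}=0$ for the exponential, and it serves as a useful cross-validation that the asymptotic constants in Theorem~\ref{the:theosimple} are correctly normalized.
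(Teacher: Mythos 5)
Your proposal is correct and follows the same route as the paper's own proof: compute $c_v(y)=1$ and $\gamma(y)=2$ for the exponential increment and substitute into Theorem~\ref{the:theosimple}, with the only addition being that you explicitly verify the arithmetic $k^*=\tfrac{5}{4}+\tfrac{1}{12}-\tfrac{4}{3}=0$, which the paper leaves implicit. The Poisson cross-check is a nice sanity remark but not part of the argument.
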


\begin{proof} For an exponential random walk, 
by Definition\,\ref{def:skewness}
we get
$c_v(y)=1,\gamma(y)=2$. Substituting
$c_v(y)$ and $\gamma(y)$ into (\ref{equ:simplekey})
leads to (\ref{equ:expo}). $\hfill$
\end{proof} 

\begin{corollary} \label{coro:coro2}
For a given buffer size $b$, the stopping time variance for an
Erlang random walk with parameters $(\alpha,\!\lambda_\alpha)$ is
\begin{align} \label{equ:erl}
\sigma^2_{\tau(b)}(erlang) = \frac{\lambda_\alpha b}{\alpha^2} +
\frac{1}{12}(1 -\frac{1}{\alpha^2}),
\end{align}
where $\alpha\!>\!1$ is the shape parameter (an integer),
$\lambda_\alpha$ refers to the rate, and  $\sigma_{\tau(b)}(erlang)$
refers to the stopping time variance
{\em w.r.t.} an Erlang random walk (increment).
\end{corollary}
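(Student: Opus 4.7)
The plan is to derive Corollary~\ref{coro:coro2} as a direct specialization of Theorem~\ref{the:theosimple}, since an Erlang increment with parameters $(\alpha,\lambda_\alpha)$ is nothing more than a positive random walk whose first three standardized moments can be written in closed form. Concretely, I would first record the standard moment identities for the Erlang law: $\mu_y = \alpha/\lambda_\alpha$, $\sigma_y^2 = \alpha/\lambda_\alpha^2$, so that the coefficient of variation is $c_v(y) = 1/\sqrt{\alpha}$ and, using the known fact that the skewness of a Gamma$(\alpha,\lambda_\alpha)$ distribution is $2/\sqrt{\alpha}$, $\gamma(y) = 2/\sqrt{\alpha}$.

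Next I would substitute these into the leading term of~(\ref{equ:simplekey}): the ratio $c_v^2(y)/\mu_y$ collapses to $(1/\alpha)/(\alpha/\lambda_\alpha) = \lambda_\alpha/\alpha^2$, which reproduces the first summand $\lambda_\alpha b/\alpha^2$ in~(\ref{equ:erl}). The remaining work is algebraic simplification of the intercept $k^*$ defined in~(\ref{equ:xkeysim}). Plugging $c_v^4(y)=1/\alpha^2$ and $c_v^3(y)\gamma(y) = (1/\alpha^{3/2})(2/\sqrt{\alpha}) = 2/\alpha^2$ into
\begin{align*}
k^* \;=\; \frac{5 c_v^4(y)}{4} + \frac{1}{12} - \frac{2 c_v^3(y) \gamma(y)}{3}
\end{align*}
gives $k^* = \frac{5}{4\alpha^2} + \frac{1}{12} - \frac{4}{3\alpha^2}$. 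Combining the two $1/\alpha^2$ terms via the common denominator $12$ yields $\frac{15-16}{12\alpha^2} = -\frac{1}{12\alpha^2}$, so $k^* = \frac{1}{12}(1-1/\alpha^2)$, matching the second summand in~(\ref{equ:erl}).

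There is no serious obstacle here; the only thing to be careful about is the apparent typographical asymmetry in~(\ref{equ:xkeysim}), where $c_v(x)$ appears in place of what context forces to be $c_v(y)$. I would note this reading explicitly before substituting, since otherwise the third term is ill-defined in the present single-sequence setting. Once that convention is fixed and the Erlang moments are inserted, the result follows immediately from Theorem~\ref{the:theosimple} with no further asymptotic argument required, because Theorem~\ref{the:theosimple} already encapsulates the large-$b$ behavior of $\sigma_{\tau_b}^2$ for any positive-increment random walk with finite variance, and the Erlang walk trivially satisfies those hypotheses for every integer $\alpha \geq 1$.
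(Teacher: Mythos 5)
Your proposal is correct and follows exactly the paper's own route: identify $c_v(y)=1/\sqrt{\alpha}$ and $\gamma(y)=2/\sqrt{\alpha}$ for the Erlang increment and substitute into Theorem~\ref{the:theosimple}; your algebra for the leading term and for $k^*=\frac{1}{12}(1-1/\alpha^2)$ checks out. Your explicit flagging of the $c_v(x)$ versus $c_v(y)$ notational slip in~(\ref{equ:xkeysim}) is a sensible reading that the paper leaves implicit.
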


\begin{proof} By Definition\,\ref{def:skewness}, 
for an Erlang random walk,
we obtain $c_v(y)\!=\!\frac{1}{\sqrt{\alpha}}$ and
$\gamma(y)\!=\!\frac{2}{\sqrt{\alpha}}$. Substitution of $c_v(y)$ and
$\gamma(y)$ into (\ref{equ:simplekey})
yields (\ref{equ:erl}).  $\hfill$
\end{proof} 
 
Consider the differential stopping time
variance between the exponential and 
the Erlang random walks by
subtracting (\ref{equ:expo}) with (\ref{equ:erl}).
\begin{align} \label{equ:compar1}
\sigma^2_{\tau(b)}(exp) -\sigma^2_{\tau(b)}(erlang) =
b\lambda_{e}-\frac{\lambda_\alpha b}{\alpha^2} +
\frac{1}{12}(1-\frac{1}{\alpha^2})
\end{align}
Notice that the mean increment size
of exponential random walk is $\mu_e = 1/\lambda_e$ and that of
the Erlang walk is $\mu_{\alpha} = 
\frac{\alpha}{\lambda_\alpha}$.
Letting $\mu_{\alpha}=\mu_e=\mu$, then (\ref{equ:compar1}) is reduced
to
\begin{align} \label{equ:comparison}
\sigma^2_{\tau(b)}(exp) -\sigma^2_{\tau(b)}(erlang) =
(1-\frac{1}{\alpha}) \left(\frac{b}{\mu}-\frac{1}{12}(1
+\frac{1}{\alpha})\right),
\end{align}
then
\begin{align} \label{equ:comp}
\sigma^2_{\tau(b)}(exp) -\sigma^2_{\tau(b)}(erlang)
\begin{cases}
> 0 & \frac{b}{\mu}>\frac{1}{12}(1 +\frac{1}{\alpha})\\
\leq 0 & \frac{b}{\mu} \leq \frac{1}{12}(1 +\frac{1}{\alpha}).
\end{cases}
\end{align}

(\ref{equ:comp}) means that with the same
mean increment size, the stopping time under the exponential random
walk (increment) has a wider variance than that under the Erlang
walk as long as the condition $b> \mu/12$ is met. 

Consider a hyper-exponential random walk (increment) as
$\sum_{i=1}^2\!p_i\lambda_i\!\exp{(-\lambda_i x)}$ where
$\sum_{i=1}^2\!p_i\!=\!1$ (letting $p_1=p, p_2=1-p$). The differential stopping
time variance between the hyper-exponential and the exponential
walks is $\sigma^2_{\tau(b)}(hp)\!-\!\sigma^2_{\tau(b)}(exp)$.
Under the same mean increment, it becomes
\begin{align} \label{equ:compare10}
\sigma^2_{\tau(b)}(hp)-\sigma^2_{\tau(b)}(exp)=
\frac{b(c^2_v(y)-1)}{\mu} + k^*,
\end{align}
where $k^*$ is explicitly given in
(\ref{equ:simplekey}). This implies that the differential stopping
time variance is linearly proportional to the buffer size $b$, that
is, $\sigma_{\tau(b)}^2(hp) -\sigma_{\tau(b)}^2(exp) 
\propto \frac{b}{\mu}(c^2_v(y)-1)>\!0$ when $b$ is sufficiently large. 
Namely, in an asymptotic sense, the hyperexponential random walk has a wider
variance in the stopping time than 
the exponential walk under the
same mean increment size condition.

Consider the {\it fixed-size buffering} scheme with a
size of $b$.
Define the stopping time, denoted by $\tau(b)$, to be a random
variable that takes on values in $[0,\infty)$. 
One sees that $\tau(b)$ is a function of $b$ and 
the size distribution of the data $\{y_i>0: i \geq 0\}$:
\begin{align} \label{equ:def}
\tau(b)=\min\{ n: \sum_{i=1}^{n} y_i \geq b \},
\end{align}
where $\tau(b)$ is referred to as the 
{\em first ladder epoch} and
$\sum_{i=1}^{\tau(b)} y_i$ is called
the {\em first epoch height} 
[\citeNP{Lai1977};\citeyearNP{Lai1979};\citeNP{Feller1971}].

\begin{table}[bht]
\centering
\begin{tabular}{ll}
\multicolumn{2}{c}{Data Traffic}  \\ \hline
$\lambda$ &  Poisson data arrival rate \\ 
$\mu_y$ & mean value of data size distribution \\ 
$b_{size}$ & size of a memory bank \\ 
$\lambda \mu_y$ & bandwidth  \\ \hline
\multicolumn{2}{c}{Radio Module} \\ \hline
$e^{wu}_w$ & energy for a radio wakeup \\ 
$e^{RX}_w$ &  energy for one-byte reception \\ 
$e^{TX}_w$ &  energy for one-byte transmission ($e^{RX}_w \approx e^{TX}_w$)\\  \hline
\multicolumn{2}{c}{Memory Bank} \\ \hline
$p^{idle}_m$ & power of idle state of one memory bank \\ 
$e^{ena}_m$ & energy to elevate from powerdown to active \\ 
$e^{dem}_m$ & energy to demote from active to idle\\
$e^{r}_m$ & energy of reading one byte \\ 
 $e^{w}_m$ & energy for writing one byte \\ 
$e^{resyn}_m$ & $(e^{ena}_m + e^{dem}_m)/2$ \\ \hline
\end{tabular}
\caption{Symbols and Meanings} \label{tab:tab4}
\end{table}

One key step is to establish a relationship
between the mean
stopping time (the first ladder epoch) and
the mean size of the data distribution. Assume
that the sensor node has enough buffer capacity to accommodate 
{\em first ladder height} (overshoot) with respect to the buffer size $b$. 

\begin{theorem} \label{the:the1}
Let $\{y_i\!>\!0,\,i\!\geq\!0\}$ be the sequence of increment sizes
with mean $\mu_y$, and $b$ be the buffer size, the mean stopping
time $E[\tau(b)] \approx \frac{b}{\mu_y}$. 
\end{theorem}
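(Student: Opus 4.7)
The plan is to invoke Wald's equation together with a bounded-overshoot argument from renewal theory. By the definition~\eqref{equ:def}, $\tau(b)=\min\{n:\sum_{i=1}^n y_i\ge b\}$ is a stopping time for $\{y_i\}$: the event $\{\tau(b)=n\}$ is determined entirely by $y_1,\dots,y_n$ and is independent of $y_{n+1},y_{n+2},\dots$ Moreover, because each $y_i>0$ with positive mean $\mu_y$, standard renewal arguments give $E[\tau(b)]<\infty$ for every finite $b$, so the hypotheses of Wald's identity~\eqref{equ:wald} are satisfied.

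Applying Wald's equation to the stopped sum yields
\begin{align*}
E\!\left[\sum_{i=1}^{\tau(b)} y_i\right] \;=\; \mu_y\, E[\tau(b)].
\end{align*}
Next I would decompose the stopped sum as $\sum_{i=1}^{\tau(b)} y_i = b + R(b)$, where $R(b)=\sum_{i=1}^{\tau(b)} y_i - b \ge 0$ is the \emph{overshoot} (the first ladder height above $b$). Rearranging gives the exact identity
\begin{align*}
E[\tau(b)] \;=\; \frac{b}{\mu_y} + \frac{E[R(b)]}{\mu_y}.
\end{align*}

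The heart of the argument is therefore to show that $E[R(b)]$ stays bounded as $b$ grows, so that the second term is $O(1)$ and negligible compared with $b/\mu_y$. This follows from Lorden's inequality, which bounds the expected overshoot by $E[y^2]/\mu_y = (\sigma_y^2+\mu_y^2)/\mu_y$; under the finite-variance assumption stated before Theorem~\ref{the:theosimple}, this bound is finite and independent of $b$. Consequently
\begin{align*}
\lim_{b\to\infty} \frac{E[\tau(b)]}{b/\mu_y} \;=\; 1,
\end{align*}
which is exactly the asymptotic statement $E[\tau(b)]\approx b/\mu_y$.

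The main obstacle is the overshoot term: one must justify that $E[R(b)]$ does not grow with $b$. Without this control the approximation would fail. I would appeal to Lorden's bound (or equivalently to the key renewal theorem, which in fact pins down the limit $E[R(b)]\to E[y^2]/(2\mu_y)$ for non-arithmetic $y$), since this is standard machinery for the first ladder epoch/height cited earlier in the paper. With that in hand the remaining manipulations are purely algebraic.
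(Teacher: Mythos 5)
Your proof is correct, and it reaches the same conclusion as the paper but controls the error term by a different mechanism. The paper sandwiches the stopped sum, $\sum_{i=1}^{\tau(b)-1} y_i < b \le \sum_{i=1}^{\tau(b)} y_i$, applies Wald's equation to both sides, and reads off $(E[\tau(b)]-1)\mu_y < b \le E[\tau(b)]\mu_y$, i.e.\ $0 \le E[\tau(b)] - b/\mu_y < 1$ --- an error of less than one arrival, obtained with no moment assumptions beyond $\mu_y<\infty$ (though it quietly applies Wald to $\tau(b)-1$, which is not a stopping time, a wrinkle usually patched by a direct renewal argument). You instead apply Wald only to the genuine stopping time $\tau(b)$, isolate the overshoot $R(b)=\sum_{i=1}^{\tau(b)}y_i-b$, and invoke Lorden's inequality $E[R(b)]\le E[y^2]/\mu_y$ to show the correction term is $O(1)$. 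Your route is technically cleaner on the Wald side and connects naturally to the ladder-height/key-renewal machinery the paper uses elsewhere (indeed it identifies the limiting overshoot $E[y^2]/(2\mu_y)$), but it pays for this with the finite-second-moment hypothesis and a slightly weaker explicit bound than the paper's ``within one arrival'' estimate. Both establish $E[\tau(b)]\approx b/\mu_y$; no gap in either direction.
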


\begin{proof} By Wald's equation in (\ref{equ:wald}) we
obtain the relation $
\sum_{i=1}^{\tau(b)-1} y_i
< b \leq \sum_{i=1}^{\tau(b)} y_i$. 
Taking expectation on both sides of this relation yields
\begin{align} \label{equ:ooo1}
E [\sum_{i=1}^{\tau(b)-1} y_i] 
 < b \leq  E[\sum_{i=1}^{\tau(b)} y_i]  
\Longrightarrow  (E[\tau(b)]-1) \mu_y 
 < b \leq E[\tau(b )] \mu_y.
\end{align}
Dividing both sides of (\ref{equ:ooo1}) by $\mu_y$ 
completes the proof.  $\hfill$
\end{proof} 
The preceding theorem asserts that the {\it fixed-size} buffering
scheme with a buffer size of $b$ can hold
$\frac{b}{\mu_y}$ data packets on average when the data size is
randomly distributed with a mean of $\mu_y$, which 
is in line with our intuition.  

For the sake of clarity, 
we summarize the power parameters in 
Table\,\ref{tab:tab4}. 
The subscripts $m$ and $w$ denote the
memory bank and radio module.
$e^{ena}_m$ and $e^{dem}_m$ refer to the
energy required to elevate a
powerdown-mode memory bank to active 
mode, and to demote an active-mode 
memory bank to idle mode,
$e^{resyn}_m$ is a resynchronization cost being 
equal to the mean value of   
$e^{ena}_m$ and $e^{dem}_m$,
and $\lambda \mu_y$ the data volume per time
unit, termed as {\em bandwidth}
due to conceptual similarity.
Since the duration of an active-mode memory bank
is extremely short, thus the energy 
consumed in the active-mode
could be reasonably ignored. 
Similarly, the energy consumed by the active-mode 
of a radio module is outweighed by 
$e^{TX}_w,e^{RX}_w$, and hence is ignored.

\section{Power-aware Buffering Schemes}
\subsection{Fixed-Size Buffering Scheme}
In this subsection, we consider the {\it fixed-size buffering}
scheme under randomly distributed 
data size with Poisson arrival. 
Assume that data size follows a 
certain probability
distribution with a finite mean of $\mu_y$. Let
$(x_i,y_i), i \geq 0$ be a sequence of random vectors in which
$\{x_i, i \geq 0\}$ refers to a 
random variable denoting
the interarrival times of Poisson arrival 
data and
$\{y_i, i \geq 0\}$ be a random variable representing the
size of the arrival data. The random variables $x_i$ and $y_i$ are assumed
to be
mutually independent. 

\begin{theorem} \label{theo:the3}
Let $\lambda$ be a Poisson arrival rate, $\mu_y$ be the mean data size, $b_{size}$ be the
size of a memory bank, $e^{wu}_w$ be the per radio wakeup
energy, and $p^{idle}_m$ be the idle-mode power consumption of a
memory bank. Then the optimal buffer size $b^*$ for the {\it
fixed-size} buffering scheme is
\begin{align} \label{equ:op-fixed-size}
b^*= \sqrt{\frac{2 b_{size} e^{wu}_w \lambda \mu_y}{p^{idle}_m}
+ \mu_y^2 k^*}
\end{align}
\end{theorem}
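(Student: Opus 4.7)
The plan is to cast this as a renewal--reward optimization and then invoke Theorems~\ref{the:theosimple} and~\ref{the:the1}. A buffering cycle runs from one radio wakeup to the next. Within a cycle the radio is woken exactly once (cost $e^{wu}_w$), data are written into memory incrementally, and each memory bank sits in the idle state until the cycle ends; the transmission, reception, and per-byte read/write energies scale with the data volume $\lambda\mu_y$, are independent of $b$, and drop out of the minimization. Writing $S(t)=\sum_{i=1}^{N(t)} y_i$ for the cumulative buffered data and $T(b)=\sum_{i=1}^{\tau(b)} x_i$ for the cycle length, the long-run average power is
\[
C(b) \;=\; \frac{e^{wu}_w}{E[T(b)]} \;+\; \frac{p^{idle}_m}{b_{size}}\cdot \frac{E\!\left[\int_0^{T(b)} S(t)\,dt\right]}{E[T(b)]} \;+\; (\text{terms independent of } b).
\]

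The first ingredient is the expected cycle length. Since the interarrival times $x_i$ are independent of the stopping rule on $\{y_i\}$, Wald's equation gives $E[T(b)]=E[\tau(b)]/\lambda$, and Theorem~\ref{the:the1} yields $E[T(b)]\approx b/(\lambda\mu_y)$. The heart of the argument is $E[\int_0^{T(b)} S(t)\,dt]$. I would use the identity $\int_0^{T(b)} S(t)\,dt=\sum_{j=1}^{\tau(b)} x_j S_{j-1}$ (obtained by integrating the step function $S$) together with the fluid approximation $S(t)\approx\lambda\mu_y t$ to reduce this expectation to $(\lambda\mu_y/2)\,E[T(b)^2]$. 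Conditioning on $\tau(b)$ makes $T(b)$ a sum of $\tau(b)$ i.i.d.\ exponentials, so $E[T^2]=\lambda^{-2}\bigl(E[\tau^2]+E[\tau]\bigr)$. Substituting $\mathrm{Var}[\tau(b)]=b\,c_v^2(y)/\mu_y+k^*$ from Theorem~\ref{the:theosimple} then supplies an explicit expression in $b$ and $k^*$.

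Putting the pieces together and dividing by $E[T(b)]\approx b/(\lambda\mu_y)$, the $O(b)$ piece of $\mathrm{Var}[\tau]$ contributes only a $b$-independent constant, while the $k^*$ term produces a $\mu_y^2 k^*/b$ correction, so that
\[
C(b) \;=\; \frac{e^{wu}_w \lambda \mu_y}{b} \;+\; \frac{p^{idle}_m}{2 b_{size}}\!\left(b + \frac{\mu_y^2 k^*}{b}\right) \;+\; \text{const.}
\]
Solving $C'(b)=0$ yields $b^2 = 2 b_{size} e^{wu}_w \lambda \mu_y/p^{idle}_m + \mu_y^2 k^*$, i.e.\ the claimed $b^*$; positivity of $C''(b)$ confirms that this stationary point is a minimum.

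The main obstacle is the rigorous evaluation of $E[\int_0^{T(b)} S(t)\,dt]$. Because $S(t)$ is a compound Poisson process with random jumps $y_i$ and the upper limit $T(b)$ is itself a stopping time for those same jumps, the fluid substitution $S(t)\approx\lambda\mu_y t$ must be justified down to the order at which $k^*$ enters. One must show that the overshoot of $S$ at $\tau(b)$ and the $x$--$y$ coupling through $\tau(b)$ produce only lower-order corrections, so that exactly the variance correction supplied by Theorem~\ref{the:theosimple} survives in the final optimum.
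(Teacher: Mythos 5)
Your proposal is correct and follows essentially the same route as the paper: a renewal--reward cycle per radio wakeup, Wald's equation with Theorem~\ref{the:the1} for the mean cycle length, and Theorem~\ref{the:theosimple} injected through $E[\tau^2(b)]=E[\tau(b)]^2+\mathrm{Var}[\tau(b)]$ into the idle-memory term, after which the first-order condition gives the stated $b^*$ and the second derivative confirms a minimum. The only divergence is the byte-seconds term: the paper evaluates $E[\sum_i(s_{\tau(b)}-s_i)y_i]$ exactly as $\mu_y(E[\tau^2(b)]-E[\tau(b)])/(2\lambda)$ using the independence of the exponential interarrivals from the stopping time, whereas your fluid substitution $S(t)\approx\lambda\mu_y t$ gives $\mu_y(E[\tau^2(b)]+E[\tau(b)])/(2\lambda)$ --- an $O(b)$ discrepancy that becomes an additive constant after dividing by the cycle length $b/(\lambda\mu_y)$, so it leaves $b^*$ unchanged (though it would perturb the paper's exact expression for $\overline{e^{FS}(b)}$ used later).
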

where $k^*$ is given in (\ref{equ:simplekey}). 

\begin{proof} Consider a random vector sequence
$(x_i,y_i)$, $i \geq 0$, where $\{x_i, i \geq 0\}$ represents the
arrival time instants (Poisson arrival) and $\{y_i, i \geq 0\}$ is
a sequence of received data sizes, with a mean $E[y_i]=\mu_y$ and
a variance $\sigma^2_y$.

Define a renewal reward process [\citeNP{Ross1996}] with the cycle
length being equal to the time duration of stopping time $\tau(b)$
as follows:
\begin{equation}
L_c =
\sum\limits_{i=0}\limits^{\tau(b)}x_{i+1}-x_{i},
\end{equation}
where $L_c$ denotes the length of 
a renewal cycle, 
and $x_{i+1}\!-\!x_i,i \geq 0$ is interarrival times. 
Letting $s_k = \sum^{k}_{i=1} x_i$. 
Thus the total energy
$e^{FS}(b)$ over a renewal cycle is
\begin{align} \label{equ:total_energy_fixed_size}
e^{FS}(b) &=e^{wu}_w+ \dfrac{p^{idle}_m}{b_{size}}
\sum_{i=1}^{\tau(b)}(s_{\tau(b)}-s_i)y_i 
+\sum_{i=1}^{\tau(b)} 
(e^{TX}_w yi) +
\sum_{i=1}^{\tau(b)} (e^{RX}_w y_i +e^{wu}_w) \\ \nonumber
& +
\sum_{i=1}^{\tau(b)} 
(e^w_m+e^r_m)y_i+2e^{resyn}_m  
\end{align}
Let us return to explaining each term in
(\ref{equ:total_energy_fixed_size}). 
$\frac{p^{idle}_m}{b_{size}} \sum_{i=1}^{\tau(b)}
(s_{\tau(b)}-s_i) y_i$ denotes the accumulated idle-mode energy for
the number of memory banks in a renewal cycle, and
$\sum_{i=1}^{\tau(b)}e^{TX}_w y_i$ 
refers to the transmission energy, 
$\sum_{i=1}^{\tau(b)} (e^w_m+e^r_m) 
y_i+2e^{resyn}_{m}$ is the total energy required to
write/read data into/from the memory banks, plus the
resynchronization energy, and 
$\sum_{i=1}^{\tau(b)} e^{RX}_w y_i +e^{wu}_w$
refers to the total energy for receiving data, plus 
the energy for radio wakeup for receiving data. 
The term $e^{wu}_w$ refers to per radio wakeup energy 
for data transmission. In other words, in each 
renewal cycle, the transmission radio wakeup occurs 
only once, while the reception radio wakeup occurs
$\tau(b)$ times. 
Recall that we assume that nodes use 
the radio-triggered power management scheme, 
thereby the radio wake-up can be initiated  
without incurring listening energy.
By Wald's equation, we get
\begin{align}
E[L_c]
=E\left[\sum_{i=0}^{\tau(b)}x_{i+1}-x_{i}\right]=E[\tau(b)]E[x_{i+1}-x_{i}]\nonumber
 =\frac{E[\tau(b)]}{\lambda} =\frac{b}{\lambda \mu_y}
\end{align}
Define $e^{FS}(t)$ to be the accumulated energy consumption at time
$t$, where multiple renewal cycles may have occurred in the time
period $[0,t]$. 
By the renewal reward theory [\citeNP{Ross1996}],
the long-run mean average energy consumption 
is  
\begin{align} \label{equ:tt1}
\overline{e^{FS}(b)} \stackrel{\text{\tiny def}}{=}
\lim_{t\rightarrow \infty} \dfrac{e^{FS}(t)}{t}
= \dfrac{E[e^{FS}(b)]} {E[L_c]} =
\dfrac{E[ e^{FS}(b)]}{\frac{b}{\lambda \mu_y}},
\end{align}
where the unit of $\overline{e^{FS}(b)}$
is the watt (W), rather than the joule (J).
Letting $e^{TX}_w =e^{RX}_w$. 
Taking expectation of the third term in
(\ref{equ:total_energy_fixed_size}) and 
applying Theorem~\ref{the:the1} give
\begin{align} \label{equ:thirdterm}
& E\left[\sum\limits_{i=1}\limits^{\tau(b)}
(2e^{TX}_w+e^w_m+e^r_m)y_i+ e^{wu}_w+2e^{resyn}_m\right] \nonumber \\ 
&= E[\tau(b)]E[ (2e^{TX}_w+e^w_m+e^r_m)y_i +e^{wu}_w+2e^{resyn}_m ]
\nonumber \\
&=\dfrac{b}{\mu_y}
\left(\mu_y(2e^{TX}_w+e^w_m+e^r_m)+ e^{wu}_w +2e^{resyn}_m \right)
\end{align}
It follows from (\ref{equ:simplekey}) 
and the assumption of independence of
$y_i$ and $s_i=\sum_{j=1}^i x_j$,  
the expectation of the second term in
(\ref{equ:total_energy_fixed_size}) thus becomes 
\begin{align} \label{equ:tt2}
E\left[\sum\limits_{i=1}^{\tau(b)}(s_{\tau(b)}-s_i)y_i\right]
& =\mu_y E\left[\sum\limits_{i=1}
\limits^{\tau(b)} s_{\tau(b)}-s_i\right]=
\dfrac{\mu_y (E[\tau^2(b)]-E[\tau(b)])}
{2\lambda} \\ \nonumber 
& =
\frac{\mu_y \left(E^2[\tau(b)]+\sigma^2_{\tau(b)}-E[\tau(b)]\right)}
{2\lambda}=\dfrac{\dfrac{b^2}{\mu_y}+b(c^2_v(y)-1)+ \mu_y k^*}{2\lambda},
\end{align}
where $k^*$ is given
in (\ref{equ:xkeysim}). Substitution of 
(\ref{equ:thirdterm})-(\ref{equ:tt2})
into (\ref{equ:tt1}) yields
\begin{align} \label{equ:fixed-size}
\overline{e^{FS}(b)} & = \dfrac{\lambda \mu_y e_w^{wu} +
\frac{p_m^{idle} \mu_y^2 k^*}{2 b_{size}}}{b}
 +\lambda \left( \mu_y (2e^{TX}_w+e^w_m+e^r_m)+
e^{wu}_w+2e^{resyn}_m \right) \nonumber \\   
&+ \frac{p^{idle}_m \mu_y}{2 b_{size}}
(c^2_v(y)-1) + \frac{p_m^{idle} b}{2b_{size}}
\end{align}
Solving $ \dfrac{\partial \overline{e^{FS}(b^*)}}{\partial
b^*}=0$ yields
$ b^*=  \sqrt{\frac{2 e^{wu}_w b_{size} \lambda \mu_y}
{p^{idle}_m}+ k^* \mu_y^2 }$. 
To prove that $b^*$ is the optimal buffer size, it suffices to
show that
\begin{align} \label{equ:ttt4}
\lim_{b \rightarrow b^*} \frac{\partial^2
\overline{e^{FS}(b)}}{\partial b^2}  = 2\dfrac{e^{wu}_w
\lambda \mu_y + \dfrac{p_m^{idle} k^* \mu_y^2}{2 b_{size}} }{(b^*)^3} > 0.
\end{align}
The proof is thus completed. $\hfill$
\end{proof} 

The following corollary is a special case of
Theorem~\ref{theo:the3}.

\begin{corollary}
When the size of received data is constant and 
identical to that of a memory
bank, the optimal size of power-aware {\em fixed-size
buffering} is expressed as
\begin{align} \label{equ:tttt}
n^*= \sqrt{\dfrac{2 \lambda e^{wu}_w}
{p^{idle}_m}+\dfrac{1}{12}} \approx
 \sqrt{\dfrac{2 \lambda e^{wu}_w}
{p^{idle}_m}},
\end{align}
\end{corollary}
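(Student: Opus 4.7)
The plan is to specialize Theorem~\ref{theo:the3} to the degenerate case where the received data size is deterministic and equal to the memory bank size $b_{size}$, and then re-express the optimal buffer size, which is given in bytes, as a count $n^*$ of memory banks.

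First I would set $\mu_y = b_{size}$ and $\sigma_y = 0$, so that $c_v(y) = \sigma_y/\mu_y = 0$. Substituting into the intercept expression (\ref{equ:xkeysim}), the leading term $\frac{5 c_v^4(y)}{4}$ drops out. The skewness contribution $\frac{2 c_v^3(y) \gamma(y)}{3}$ requires a little care, since $\gamma(y) = \mu_3/\sigma_y^3$ is formally $0/0$ for a degenerate variable; however, the product $c_v^3(y)\gamma(y)$ equals $\mu_3/\mu_y^3$, and the third central moment $\mu_3 = E[(y-\mu_y)^3]$ vanishes identically when $y$ is constant. Thus $k^* = \tfrac{1}{12}$ in this degenerate regime.

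Next I would plug $\mu_y = b_{size}$ and $k^* = \tfrac{1}{12}$ into (\ref{equ:op-fixed-size}), obtaining
\begin{align*}
b^* \;=\; \sqrt{\frac{2 b_{size}^2 e^{wu}_w \lambda}{p^{idle}_m} + \frac{b_{size}^2}{12}} \;=\; b_{size}\sqrt{\frac{2\lambda e^{wu}_w}{p^{idle}_m} + \frac{1}{12}}.
\end{align*}
Since each memory bank stores exactly one data packet in this setting, the optimal buffer size measured in memory banks is simply $n^* = b^*/b_{size}$, which yields the first equality in (\ref{equ:tttt}). The approximation $n^* \approx \sqrt{2\lambda e^{wu}_w / p^{idle}_m}$ then follows by observing that in any realistic parameter regime the wakeup energy $e^{wu}_w$ and arrival rate $\lambda$ make the first term dominant, so the additive $\tfrac{1}{12}$ correction is negligible.

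The main obstacle is the proper handling of $k^*$ when $c_v(y) = 0$, because the skewness $\gamma(y)$ is undefined for a deterministic variable; the argument must reinterpret the offending product through the raw moment $\mu_3/\mu_y^3$ rather than through $\gamma(y)$ itself. Once that limiting evaluation is justified, the remainder of the proof is a direct substitution into Theorem~\ref{theo:the3}, and the positivity of the second derivative at $b^*$ established in (\ref{equ:ttt4}) automatically guarantees that $n^*$ inherits the minimizer property.
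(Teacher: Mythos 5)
Your proposal is correct and is exactly the direct-substitution argument the paper treats as trivial and omits: set $\mu_y=b_{size}$, evaluate $k^*=\tfrac{1}{12}$ for a degenerate size distribution (consistent with the paper's own later use of $k^*=1/12$ for constant data size in Section~4), and divide $b^*$ by $b_{size}$ to obtain $n^*$. Your careful reinterpretation of the product $c_v^3(y)\gamma(y)$ as $\mu_3/\mu_y^3$ to avoid the $0/0$ in the skewness term is a welcome extra justification that the paper does not bother to spell out.
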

where $n^*$ in (\ref{equ:tttt}) refers to
the number of memory banks used, hence the optimal
buffer size is $b^*= n^* b_{size}$ ($b^*$ is a multiple of 
memory bank size $b_{size}$).  

\begin{proof} It is almost trivial and therefore omitted. $\hfill$
\end{proof} 

Theorem~\ref{theo:the3} takes into account the
impact of unevenly distributed data size, thereby 
generalizing the previous work [\citeNP{Ling2007}]
beyond the fixed-size data condition. 
It shows that the first two moments of 
data size distribution (mean and variance) alone are not sufficient
to capture the dynamics of the power-aware fixed-size buffering.
The term $\mu_y^2\,k^*$ in (\ref{equ:op-fixed-size}) represents
the impact of varying-size data on the power-aware {\it fixed-size}
buffering scheme, which is orthogonal to the data arrival rate
$\lambda$. Such an impact can be quantitatively isolated in the form
as:

\begin{align} \label{equ:tttt5}
\Delta_{v} b\!=\!\sqrt{\frac{2 e^{wu}_w b_{size} \lambda \mu_y}
{p^{idle}_m}+\mu_y^2 k^*}\!-\!\sqrt{\frac{2e^{wu}_w b_{size} \lambda \mu_y}
{p^{idle}_m}}
\end{align}
where $\Delta_{v} b$ refers to the
purely size variation-induced impact on
the {\em fixed-size} buffering scheme.
Examination of $k^*$, at least in principle, 
can elucidate the respective roles of
skewness and coefficient of variation 
in determining the optimal
buffer size $b^*$. The effect of 
size variability could be either
mitigated or augmented by the skewness in size distribution. A
positive skewness alleviates the impact 
of size variability.
In contrast,
a negative skewness strengthens the impact of size variability. In
this case, $k^*$ is positive and grows polynomially with $c_v(y)$,
thereby ensuring $\Delta_{v} b > 0$. 
This requires an additional
buffer size be allocated in order to accommodate the variability in
the data size distribution. 

\begin{figure}[tb]
\centerline{\psfig{file=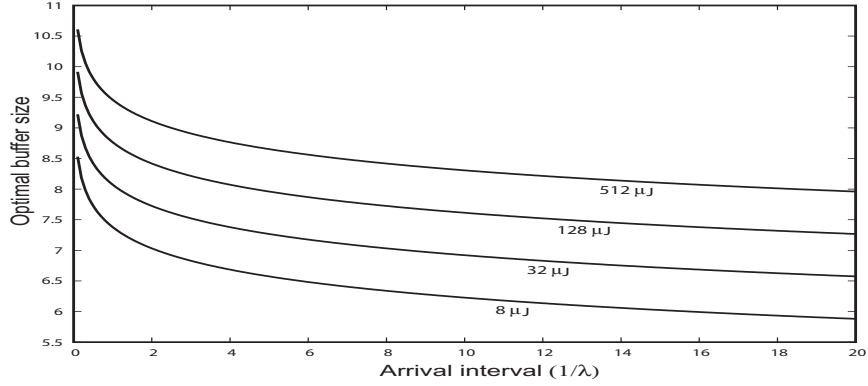,height=2in,width=4.5in}}
\caption{Optimal buffer size $b^*$ vs. 
$1/\lambda$: $\left[p_m^{idle}(0.409\mu W),b_{size}(256 b),\mu_y(256 b),e^{wu}_w (8\mu J,32\mu J,128\mu J, 512\mu J)\right]$}
\label{fig:rate}
\end{figure}

Figure\,\ref{fig:rate} plots the optimal buffer size ($b^*$) as a
function of data arrival interval ($1/\lambda$) when the data
size is exponentially distributed with a mean value of $256$ bytes.
The curves are plotted in a semilog format: the $y$-axis refers to
the optimal buffer size $b^*$ in a {\it log} scale and the
$x$-axis refers to the mean data arrival time $1/\lambda$. 
It shows that an
increase in per radio wakeup energy or in data 
arrival rate (decreasing data arrival interval) 
demands a large buffer size to reduce   
the amortized per radio wakeup cost. This observation agrees with
intuition. 
Combining (\ref{equ:op-fixed-size}) 
and (\ref{equ:fixed-size})
gives the overall power
consumption of the fixed-size buffering as follows:
\begin{align}\label{equ:optimal-size}
\overline{ e^{FS}(b^*)}=
\underbrace{\frac{p_m^{idle} b^*}{b_{size}}\!+
\dfrac{p^{idle}_m \mu_y}{2b_{size}}\!(c_v^2(y)-1)
+\lambda \left(2e^{resyn}_m+e^{wu}_w+\mu_y(e^w_m+e^r_m)\right)}_{\mathrm{buffering}}+
\underbrace{2\lambda\mu_y e^{TX}_w}_{\mathrm{trans/rec}} 
\end{align}

\eqref{equ:optimal-size} yields 
some interesting observations: the
power consumption composition can be roughly 
divided into two
pieces: 1) data transmission/reception power consumption
is linearly proportional to
{\em bandwidth}, {\it i.e.}, $\lambda \mu_y$. 
2) data buffering
power consumption is quite
complicated, hence resists a straightforward
explanation: the buffering power 
consumption not only relies on data arrival
rate $\lambda$ but also depends on the first three moments of the
size distribution. \eqref{equ:optimal-size} shows explicitly
that the data buffering power 
consumption grows asymptotically in proportion to
both $(c_v^2(y)-1)$ and the arrival rate $\lambda$, implying that
the low-variance data size distribution ($c_v(y)\!<\!1$) consumes
less power than the high-variance data size ($c_v(y)\!>\!1$).

\subsection{Fixed-Interval Buffering Scheme}
In this subsection we study the power-aware fixed-interval buffering
scheme, which differs from
its power-aware fixed-size counterpart. The following theorem
gives a direct relation among the 
optimal time interval $T^*$,
the power parameter of radio and memory bank, and data rate and the mean size of the data
distribution. 
\begin{theorem} \label{the:the2}
Let $\lambda$ be a Poisson arrival rate, $\mu_y$ be the mean
size of the data distribution, $e^{wu}_w$ be the per radio wakeup energy, and
$p^{idle}_m$ be the idle state power consumption of a memory bank.
Then, the optimal interval $T^*$ for the {\em fixed-interval} buffering
scheme is:
\begin{align} \label{equ:op-fixed-interval}
T^* &= \sqrt{\frac{2 e^{wu}_w b_{size}}
{p^{idle}_m \lambda \mu_y}}
\end{align}
\end{theorem}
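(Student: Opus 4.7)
The plan is to mirror the renewal-reward argument used for Theorem~\ref{theo:the3}, but with the roles of ``size'' and ``time'' swapped: here the cycle length is deterministic, $L_c = T$, while the number of arrivals $N(T)$ inside one cycle is the random quantity. First I would enumerate the constituent energies consumed during one $T$-long renewal cycle, following the same taxonomy as in~\eqref{equ:total_energy_fixed_size}: one transmission-wakeup cost $e^{wu}_w$ and one pair of resynchronization costs $2 e^{resyn}_m$ per cycle; $N(T)$ reception wakeups and $N(T)$ per-arrival memory-write plus radio-reception energies; transmission energy proportional to $\sum_{i=1}^{N(T)} y_i$; and an idle-mode memory cost of $\frac{p^{idle}_m}{b_{size}} \sum_{i=1}^{N(T)} y_i(T - t_i)$, where $t_i$ denotes the arrival epoch of the $i$-th packet.

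Next I would take expectations using three elementary facts: $E[N(T)] = \lambda T$ from the Poisson assumption; $E\bigl[\sum_{i=1}^{N(T)} y_i\bigr] = \lambda T \mu_y$ by Wald's equation; and the conditional-uniform property of Poisson arrivals, namely that given $N(T)=n$, the epochs $t_1,\ldots,t_n$ are distributed as the order statistics of $n$ i.i.d.\ uniforms on $[0,T]$. The latter yields $E[T - t_i \mid N(T)] = T/2$, and combining this with the independence of $y_i$ and $t_i$ gives
\begin{align*}
E\!\left[\sum_{i=1}^{N(T)} y_i (T - t_i)\right] = \mu_y \cdot E[N(T)] \cdot \frac{T}{2} = \frac{\lambda \mu_y T^{2}}{2}.
\end{align*}
This is the technical heart of the derivation and the cleanest structural contrast with the fixed-size case: whereas Theorem~\ref{the:theosimple} had to be invoked to control $\sigma^{2}_{\tau(b)}$ and produced the $k^{*}$ correction, here the idle-energy expectation is exactly quadratic in $T$ with no skewness- or variance-dependent term, which is precisely why~\eqref{equ:op-fixed-interval} is free of $c_v(y)$ and $\gamma(y)$.

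Dividing the total expected cycle energy by $E[L_c] = T$ (trivial, since $L_c$ is deterministic) and invoking the renewal-reward theorem~[\citeNP{Ross1996}] produces an average-power expression of the schematic form
\begin{align*}
\overline{e^{FI}(T)} = \frac{e^{wu}_w + 2e^{resyn}_m}{T} + \lambda\bigl(e^{wu}_w + \mu_y(2e^{TX}_w + e^{w}_m + e^{r}_m)\bigr) + \frac{p^{idle}_m \lambda \mu_y}{2 b_{size}}\, T.
\end{align*}
Only the first and last terms depend on $T$, so solving $\partial \overline{e^{FI}(T)}/\partial T = 0$ balances the $1/T$ piece against the linear idle-memory piece, immediately yielding the claimed $T^{*}$ once the $2e^{resyn}_m$ contribution is absorbed (consistent with how the analogous constants were grouped in the proof of Theorem~\ref{theo:the3}). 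I would close by noting that the second derivative $2(e^{wu}_w + 2e^{resyn}_m)/T^{3} > 0$, so $T^{*}$ is indeed a minimizer.

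The main obstacle is not the optimization but the bookkeeping inside the idle-energy term. Specifically, care is needed in invoking the conditional-uniform distribution of Poisson epochs and in justifying the interchange that lets $\mu_y$ be pulled out of $E[\sum y_i(T-t_i)]$ via independence of sizes and inter-arrival times. Once those two steps are done cleanly, everything else --- applying renewal-reward, differentiating in $T$, and checking convexity --- is routine and parallels the fixed-size calculation.
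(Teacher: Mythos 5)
Your overall route is the paper's route: a renewal--reward argument with deterministic cycle length $T$, Wald's equation giving $E\bigl[\sum_{i=1}^{N(T)} y_i\bigr]=\lambda T\mu_y$, an exact computation of the idle-memory term $E\bigl[\sum_{i=1}^{N(T)}y_i(T-t_i)\bigr]=\lambda\mu_y T^2/2$, and then differentiation in $T$. Your use of the conditional-uniform property of Poisson epochs for the idle term is a slightly cleaner packaging of what the paper does with the explicit integral $\int_0^T(T-t)\lambda e^{-\lambda t}\sum_i\frac{(\lambda t)^{i-1}}{(i-1)!}\,\mathrm{d}t$; both collapse to $\lambda\int_0^T(T-t)\,\mathrm{d}t$. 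Your structural remark on why no $c_v(y)$ or $\gamma(y)$ survives is also exactly the paper's point.

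The one genuine problem is your placement of the resynchronization energy. You charge $2e^{resyn}_m$ once per cycle, which puts it in the $1/T$ coefficient:
\[
\overline{e^{FI}(T)}=\frac{e^{wu}_w+2e^{resyn}_m}{T}+\cdots+\frac{p^{idle}_m\lambda\mu_y}{2b_{size}}\,T ,
\]
whose minimizer is $T^*=\sqrt{2(e^{wu}_w+2e^{resyn}_m)b_{size}/(p^{idle}_m\lambda\mu_y)}$, not the stated \eqref{equ:op-fixed-interval}. Saying the $2e^{resyn}_m$ contribution ``is absorbed'' is not a valid step: any constant sitting in the $1/T$ term necessarily shifts the optimizer. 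In the paper's accounting (see how \eqref{equ:ooo} pulls $2e^{resyn}_m$ and the reception wakeup $e^{wu}_w$ inside the expectation with factor $E[n(T)]=\lambda T$), the resynchronization cost is charged per arrival --- one powerdown-active-write-idle cycle per received packet --- so it contributes $2\lambda e^{resyn}_m$ to the $T$-independent term and drops out of the optimization, leaving only the single per-cycle transmission wakeup $e^{wu}_w$ in the $1/T$ term. That is what makes \eqref{equ:op-fixed-interval} come out exactly. (In fairness, the paper's own \eqref{equ:c1} typesets $2e^{resyn}_m$ outside the sum, so your reading is understandable; but to prove the theorem as stated you must commit to the per-arrival accounting.) Everything else in your proposal is sound.
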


\begin{proof} Let $T$ be the interval of the fixed-interval
buffering scheme. The fixed-interval buffering is a
special case of the renewal process in which the renewal cycle is
constant. Hence the energy consumed in a renewal cycle is expressed as
\begin{align} \label{equ:c1}
e^{FI}(T) & = e^{wu}_w + p^{idle}_m
\sum_{i=1}^{n(T)}
\dfrac{(T-s_i)y_i}
{b_{size}}+
\sum_{i=1}^{n(T)} 
e^{TX}_w y_i+
\sum_{i=1}^{n(T)} 
(e^{RX}_w y_i+ e_w^{wu}) & \\ \nonumber
& +\sum_{i=1}^{n(T)} (e^w_m +e^r_m)y_i+2e^{resyn}_m,
\end{align}
where $n(T)$ is a random variable denoting the number of data
arrivals within the interval $T$, $y_i$ is the size of the {\it i}th
arrived data, and the arrival time $s_i=\sum_{j=1}^i x_j-x_{j-1}$.
The term $\sum_{i=1}^{n(T)} (e^{RX}_w y_i+ e_w^{wu})$
refers to the total reception energy in the interval $T$, which 
involves the energy consumed in receiving the arrived data 
$\sum_{i=1}^{n(T)} e^{RX}_w y_i$, and the energy of 
radio wakeup for data reception $n(T) e_w^{wu}$.
Notice that the radio-triggered power scheme does not 
incur listening power consumption. 
By the renewal reward theory, the long-run mean average energy
consumption is
\begin{align} \label{equ:fixed-interval-buffering}
\overline{e^{FI}(T)} \stackrel{\text{\tiny def}}{=}
\lim_{t\rightarrow \infty} \dfrac{e^{FI}(t)}{t} =\dfrac{
E[e^{FI}(T)]}{T}.
\end{align}
By Wald's equation, the expectation of
the second term in (\ref{equ:c1}) is
\begin{align} \label{equ:equ10}
 \dfrac{p^{idle}_m}{b_{size}}
E \left[\sum^{n(T)}_{i=1}(T-s_i)
I_{\{n(T)>0\}}y_i\right] &
 = \dfrac{p^{idle}_m \mu_y}{b_{size}}
\int^T_0 (T-t) \lambda e^{-\lambda t}
\left (\sum^\infty_{i=1}
\dfrac {(\lambda t)^{i-1}}{(i-1)!}\right ) 
\mathrm{d}t \\ \nonumber
& = \dfrac{p^{idle}_m \mu_y \lambda}{b_{size}} \int ^T_0 (T-t) \mathrm{d}t =
\dfrac{p^{idle}_m \mu_y\lambda T^2}{2b_{size}},
\end{align}
where $I_{\{n(T)>0\}}$ is the indicator function.  
Letting $e_w^{TX}\approx w_w^{RX}$, the expectation of 
the third-sixth terms in (\ref{equ:c1}) are simplified as
\begin{align} \label{equ:ooo}
& E\left[\sum\limits_{i=1}\limits^{n(T)}y_i
(2e^{TX}_w+e^w_m +e^r_m)+2e^{resyn}_m + e^{wu}_w\right] \\ \nonumber
&= E[n(T)] 
E[y_i(2e^{TX}_w+e^w_m +e^r_m)+2e^{resyn}_m +e^{wu}_w]  \\ \nonumber
& = \lambda T 
(\mu_y \left(2e^{TX}_w+e^w_m +e^r_m)+2e^{resyn}_m +w^{wu}_w\right).
\end{align}
Combining (\ref{equ:fixed-interval-buffering})-(\ref{equ:ooo})
gives
\begin{align} \label{equ:qqqq}
\overline{e^{FI}(T)} = \frac{E[\xi_1 (T)]}{T} =
\dfrac{e^{wu}_w}{T} + \dfrac{\lambda T p^{idle}_m \mu_y}
{2b_{size}} + \lambda 
\left(\mu_y (2e^{TX}_w+e^w_m+e^r_m)+2e^{resyn}_m+e^{wu}_w\right)
\end{align}
Taking derivative of (\ref{equ:qqqq}) {\em w.r.t.} $T$ gives
$\dfrac{ \partial \overline{e^{FI}(T)}}{\partial T} =
-\dfrac{e^{wu}_w}{T^2}+ \dfrac{p^{idle}_m
\lambda \mu_y}{2b_{size}}$. 
Resolving
$\frac{\partial \overline{e^{FI}(T^*)}}{\partial T^*} =0$
leads to (\ref{equ:op-fixed-interval}). $\hfill$
\end{proof}

Examination of $T^*$ in (\ref{equ:op-fixed-interval})
reveals the apparent variability immunity of 
the fixed-interval buffering scheme
since $T^*$ only contains the first moment $\mu_y$ of the size distribution.
Substituting (\ref{equ:op-fixed-interval}) into (\ref{equ:qqqq})
gives

\begin{align}\label{equ:optimal-fixed-interval1010}
\overline{e^{FI}(T^*)}&=
\underbrace{\sqrt{\dfrac{2 p^{idle}_m
e^{wu}_w \lambda \mu_y }{b_{size}}}
+\lambda (2e^{resyn}_m+ e^{wu}_w+\mu_y(e^w_m+e^r_m))}_{\mathrm{buffering}} 
 +\underbrace{\lambda 2\mu_y e^{TX}_w }_{\mathrm{trans/rec}} 
\end{align}

In a similar fashion, the power consumption
composition 
in \eqref{equ:optimal-fixed-interval1010} 
also can be divided into
the data transmission/reception and buffering pieces.
The data transmission/reception piece is linearly
proportional to the {\em bandwidth} ($\lambda \mu_y$), while
the data buffer one is
proportional to the square root of the bandwidth
$\sqrt{\lambda \mu_y}$.
Although there is very
little apparent relationship between the fixed-size and the
fixed-interval buffering schemes, both buffering schemes
essentially share the same transmission/reception component but
differ markedly in their data buffering components:
the data buffering component of the fixed-interval 
buffering scheme
is a function of {\em bandwidth}. 
By contrast,
that of the fixed-size buffering scheme is linked to 
the {\em bandwidth} and 
the first three moments of size distribution 
explicitly expressed in term $\mu_y^2k^*$.
\begin{figure}[hbt]
\centerline{\psfig{file=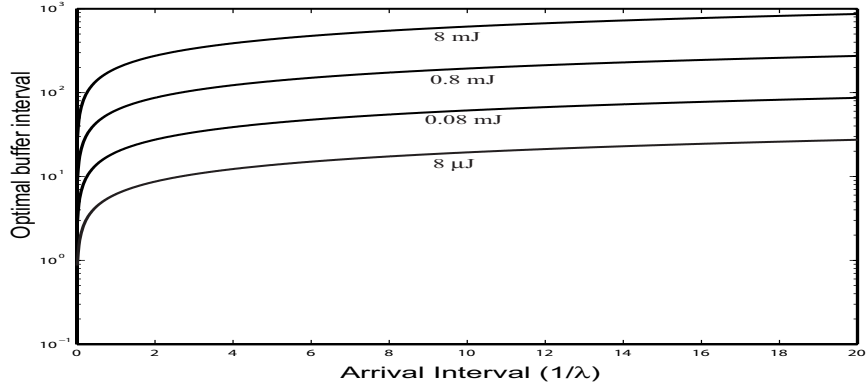,height=2in,width=4.5in}}
\caption{Optimal buffer interval $T^*$ vs. 
$1/\lambda$: $\left[p_m^{idle}(0.409\mu W),b_{size}(256 b),\mu_y(256 b),e^{wu}_w (8\mu J,0.08 mJ,0.8 mJ,8 mJ)\right]$} 
\label{fig:optimalinter}
\end{figure}

The curves in a semilog format in
Figure\,\ref{fig:optimalinter} show that radio wakeup
energy increase results in optimal time interval increase, 
while increasing
idle-mode power consumption in a memory bank reduces the optimal
buffer interval. This can be explained intuitively
as follows: for a high per radio wakeup energy, a large data buffer
(large optimal interval) can effectively 
reduce the amortized per
radio wakeup energy, while a high sleep-mode power consumption would
increase the power consumption of buffering, 
thereby reducing optimal buffer interval $T^*$.

Let us digress a little bit from the  main 
derivation to examine
the no-buffer scheme: a special case of the {\it fixed-interval}
buffering scheme in which the sensor node transmits data immediately
upon receipt of measured data. Mathematically, this corresponds to a
case where the mean buffer interval $T\!=\!\frac{1}{\lambda}$.
The following corollary deals with the no-buffer scheme. 

\begin{corollary} \label{coro:nobuffer}
The long-run mean average energy consumption of the no-buffer
scheme, denoted by $\overline{e(nb)}$, is
\begin{align}  \label{equ:nobuffer}
\overline{e(nb)}
=\lambda (2e^{wu}_w\!+\mu_y(2e^{TX}_w+e^w_m+e^r_m)
+2e^{resyn}_m )
\end{align}
\end{corollary}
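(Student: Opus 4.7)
The plan is to treat the no-buffer scheme as a degenerate special case of the fixed-interval scheme analyzed in Theorem~\ref{the:the2}, exploiting the fact that when data is shipped out immediately on arrival there is no dwell time in idle-mode memory. Concretely, I would set $T = 1/\lambda$ in the expression~(\ref{equ:qqqq}) for $\overline{e^{FI}(T)}$, so that the average number of arrivals per renewal cycle is one, matching the ``transmit on arrival'' semantics of the no-buffer scheme.

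Next I would argue that the idle-mode buffering term $\dfrac{\lambda T p^{idle}_m \mu_y}{2 b_{size}}$ in~(\ref{equ:qqqq}) vanishes. That term came from $E[\sum_i (T-s_i) y_i]$, i.e.\ the integrated time each piece of data spends waiting in an idle memory bank before the next scheduled read. Under the no-buffer policy every arrival initiates its own read/transmit cycle, so the idle-hold time per datum is effectively zero and this contribution drops out. What remains in~(\ref{equ:qqqq}) is the per-arrival reception wakeup $e^{wu}_w$ counted at rate $\lambda$, the per-cycle transmission wakeup contributing $e^{wu}_w/T = \lambda e^{wu}_w$, the transmission/reception energy $2\lambda\mu_y e^{TX}_w$, the memory read/write energy $\lambda\mu_y(e^w_m+e^r_m)$, and the two resynchronization costs per cycle contributing $2\lambda e^{resyn}_m$. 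Collecting these terms yields
\begin{align*}
\overline{e(nb)} = \lambda\bigl(2 e^{wu}_w + \mu_y(2 e^{TX}_w + e^w_m + e^r_m) + 2 e^{resyn}_m\bigr),
\end{align*}
which is exactly~(\ref{equ:nobuffer}).

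As a sanity check I would also give a direct bookkeeping derivation treating each Poisson arrival as its own renewal cycle: per arrival one incurs a reception wakeup $e^{wu}_w$, a transmission wakeup $e^{wu}_w$, reception energy $e^{RX}_w y_i$, transmission energy $e^{TX}_w y_i$, memory write/read energies $(e^w_m + e^r_m)y_i$, and two resynchronization transitions $2 e^{resyn}_m$. Taking expectation over $y_i$ and multiplying by the arrival rate $\lambda$ (via Wald/renewal-reward on a degenerate cycle of length $1/\lambda$) reproduces the same formula, confirming the two viewpoints agree.

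The only conceptually delicate step is justifying the dropping of the idle-mode term when specializing the fixed-interval formula; this is not a computational difficulty but an interpretive one, and it is cleanly resolved by noting that in the no-buffer regime the memory bank is never in the idle state while holding data awaiting a scheduled transmission instant. All remaining manipulations are just substitution and collection of $\lambda$-linear terms, so I expect no real technical obstacle.
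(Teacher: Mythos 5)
Your proposal is correct and matches the paper's own argument: the paper likewise treats each arrival as a renewal cycle of mean length $1/\lambda$, writes the per-cycle energy as $2e^{wu}_w+(2e^{TX}_w+e^w_m+e^r_m)y_i+2e^{resyn}_m$ after ``removing the buffering factors,'' and applies renewal-reward --- which is exactly your direct bookkeeping derivation, while your first route (setting $T=1/\lambda$ in (\ref{equ:qqqq}) and dropping the idle-mode term) is just a repackaging of the same step. The interpretive point you flag, that the idle-hold term must be deleted rather than evaluated at $T=1/\lambda$, is handled identically (and equally informally) in the paper.
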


\begin{proof} Consider a renewal reward process with the
cycle length ($\mathrm{L_c}$)being equal to the data arrival interval $T=1/\lambda$. 
Thus the energy consumed in this cycle is
\begin{align} \label{equ:no_buffer}
&e(nb)=e^{wu}_w+(2e^{TX}_w+e^w_m+e^r_m)y_i+2e^{resyn}_m +e^{wu}_w,
\end{align}
where $y_i$ is the size of {\it i}th arrived data.
(\ref{equ:no_buffer}) is simply attained by removing the buffering
factors (terms) in (\ref{equ:equ10}). Using the same argument in
proving Theorem~\ref{the:the2} we get
$\overline{e(nb)}=\frac{E[e(nb)]}{\frac{1}{\lambda}}$ in
(\ref{equ:nobuffer}) since $E[L_c]\!=\!\frac{1}{\lambda}$. $\hfill$
\end{proof}

\begin{figure}[tbh]
\centerline{\psfig{file=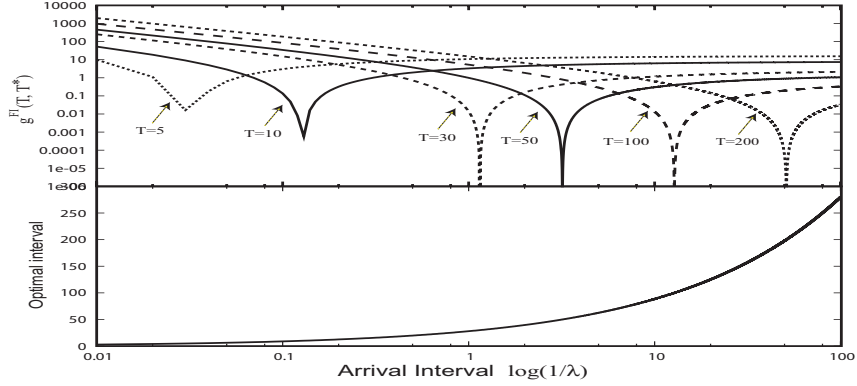,height=2in,width=4.5in}}
\caption{Top: $g^{FI}(T,T^*)$ vs.
arrival interval $\log(1/\lambda$). Bottom: $T^*$ vs. arrival interval $\log(1/\lambda)$: $\left[p_m^{idle}(0.409\mu W),e^{wu}_m(80 \mu J),b_{size}(128b),
\mu_y(64b)\right]$}
\label{fig:interval-compare}
\end{figure}

Define a function $g^{FI}(T,T^*)$ to 
quantify the differential gain of
the optimal power-aware fixed-interval buffering over a power-oblivious 
buffering scheme.
\begin{align} \label{equ:general-interval}
g^{FI}(T,T^*)= \overline{e^{FI}(T)}-
\overline{e^{FI}(T^*)}=\frac{e^{wu}_w}{T} +\dfrac{T p^{idle}_m\lambda \mu_y}
{2 b_{size}} -
\sqrt{\frac{2 p^{idle}_m e^{wu}_w \lambda \mu_y}{b_{size}}},
\end{align}
where $T^*$ is the optimal interval and $T$ is chosen arbitrarily. 

Results of $g^{FI}(T,T^*)$ are plotted in
Figure\,\ref{fig:interval-compare}. The main trend is that the
optimal buffer interval $T^*$ grows as the square root of
$\frac{1}{\lambda}$ (see bottom graph), and that a dip in each curve
occurs when arbitrarily chosen $T$ happens to be 
in the vicinity of $T^*$. The
differential gain $g^{FI}(T,T^*)$ arises sharply
when $T$ deviates from the optimal buffer interval $T^*$. 
This implies that a blind selection of buffer interval $T$
is very likely to incur an excessive energy consumption.

\section{Performance Comparison}
In this section we attempt to answer two fundamental questions: 1)
how much power saving via power-aware buffering can be achieved in
comparison to the no-buffer scheme\,? 2) 
the {\it fixed-size} buffering or its {\it fixed-interval}
counterpart, which one performs better ?

\subsection{Comparison between the no-buffer scheme and
power-aware buffering schemes}

It is evident that the {\it
fixed-interval} buffering always 
outperforms the no-buffer scheme as
the latter is a special case 
of the former. Below we compare the no-buffer 
scheme with the {\it fixed-size} buffer one.

The differential power consumption between the no-buffer 
and
optimal fixed-size buffering schemes is expressed as
\begin{align} \label{equ:inequality}
\overline{e(nb)}- \overline{e^{FS}(b^*)} = \lambda e^{wu}_w - \frac{p_m^{idle}}{b_{size}} \left (b^* +
\dfrac{\mu_y (c^2_v(y)-1)}{2} \right ) >0
\end{align}

\eqref{equ:inequality} does in fact constitute an {\it incentive condition}
under which the optimal {\it fixed-size} buffering scheme outperforms
the no-buffer scheme in power conservation. 
It shows that increasing
variability $c_v(y)$ in effect erodes
the gain brought out by the power-aware {\it fixed-size} buffering
scheme, hence shrinks the incentive area. A positive skewness in size
distribution can neutralize, to some extent, the size
variability-induced impact. 
While in general this incentive
condition could be profoundly affected by various intertwined and
correlated factors, we explicitly derive closed-form expressions under some 
restricted scenarios:

1) Exponential data size distribution $y$ with a mean of $\mu_y$.
Under this condition, $k^*$ is reduced to zero according to
(\ref{equ:simplekey}), the incentive condition thus becomes
\begin{align} \label{equ:exponential}
\overline{e(nb)}- \overline{e^{FS}(b^*)}  = \lambda e^{wu}_w -
\frac{p_m^{idle} b^*}{b_{size}} =  \lambda
e^{wu}_w - \sqrt{\frac{2 \lambda p_m^{idle} e_w^{wu}
\mu_y}{b_{size}}} >0
\end{align}

It is obvious that (\ref{equ:exponential}) holds as long 
as $\lambda e^{wu}_w\!>\!\frac{2 p^{idle}_m
\mu_y}{b_size}$ is met. This incentive condition can be rewritten in
a structurally meaningful form that emphasizes the distinction
between hardware parameters and operational 
requirement as follow
\begin{align} \label{equ:cohesive}
\dfrac{e^{wu}_w}{\frac{p^{idle}_m}{b_{size}}} > \frac{2
\mu_y}{\lambda}.
\end{align}
Using {\it byte-second} as a quantifiable unit, the left-hand side
of (\ref{equ:cohesive}) is related to hardware power parameters:
the ratio of radio wakeup energy to the per-byte idle-mode power
consumption of a memory bank. The right-hand side, on the other hand,
is related to the operational requirement: the ratio of the mean
data size to the data arrival rate. 
For given power parameters and $\mu_y$, there exists a critical value for
$\lambda_c\!=\!\frac{2\mu_y p^{idle}_m}{b_{size} e^{wu}_w}$.
When $\lambda>\lambda_c$, the {\it fixed-size} buffering scheme is
preferred. Otherwise, the no-buffer scheme is preferred.
A high ratio of the per radio wakeup energy to the 
per-byte
idle-mode memory power consumption favors a 
large buffer size. On the
other hand, the benefit of data buffering is
diminished as $\frac{\mu_y}{\lambda}$ increases.

2) Erlang size distribution $y$ with parameters
$(\alpha,\lambda_{\alpha})$. This corresponds to the case 
in which $c_v(y)=\frac{1}{\sqrt{\alpha}},
\gamma(y)=\frac{2}{\sqrt{\alpha}},
\mu_y=\dfrac{\alpha}{\lambda_{\alpha}},
\sigma_y=\dfrac{\sqrt{\alpha}}{\lambda_{\alpha}}$, then
$k^*=\!\frac{1}{12}(1-\frac{1}{\alpha^2})$.
The incentive condition is thus expressed as

\begin{align} \label{equ:erlang}
\overline{e(nb)}- \overline{e^{FS}(b^*)}  = \lambda e^{wu}_w
 -\frac{p_m^{idle}}{b_{size}}
\left(b^*-\frac{\mu_y}{2} (1-\frac{1}{\alpha})\right)  > 0
\end{align}
where
\begin{align} \label{equ:kk}
b^*= \sqrt{\frac{2 \lambda b_{size} e_w^{wu} \mu_y}{p^{idle}_m}
+ \frac{\mu_y^2}{12}(1-\frac{1}{\alpha^2})}
\end{align}
(\ref{equ:kk}) shows that, as compared with an
exponential size distribution ($\alpha=1$), an additional buffer size
needs to be allocated when the shape parameter $\alpha>1$. To study
the impact of data arrival rate $\lambda$, we define $f(\lambda)=
\overline{e(nb)}\!-\!\overline{e^{FS}(b^*)}$. Differentiating
$f(\lambda)$ and solving $f^\prime(\lambda^*)=0$ gives
\begin{align} \label{equ:minim}
\lambda^* =\frac{\mu_y p^{idle}_m}{24 b_{size} e^{wu}_w} \left
(11 +\frac{1}{\alpha^2} \right)
\end{align}

\noindent Observe that $f(\lambda)$ has a global minimum point at
$\lambda^*$ since $f^{\prime\prime}(\lambda^*)>0$. This implies
that $f(\lambda^*) < f(\lambda), \lambda \in {\cal R}^+$ and
$\lambda \not=\lambda^*$. Substituting (\ref{equ:minim}) into
(\ref{equ:erlang}) gives
\begin{align} \label{equ:llll}
f(\lambda^*)= -\frac{\mu_y p^{idle}_m}{24 b_{size}} \left (1
+\frac{12}{\alpha}- \frac{1}{\alpha^2} \right) < 0, \ \  \alpha \geq
1
\end{align}
Since $f(0)= \frac{p^{idle}_m\mu_y}{2 b_{size}}
\left(1-\frac{1}{\alpha}-
\frac{1}{\sqrt{3}}\sqrt{1-\frac{1}{\alpha^2}}\right)\leq 0, \alpha
\geq 1$, and $f(\lambda) \rightarrow e^{wu}_w \lambda > 0$ when
$\lambda$ is sufficiently large, one concludes 
that there exists a critical data rate
$\lambda_c > \lambda^*$ such that the no-buffer scheme outperforms
the {\it fixed-size} buffering scheme 
when $\lambda \leq \lambda_c$ and the {\it fixed-size} buffering scheme
is preferred when $\lambda>\lambda_c$. Figure\,\ref{fig:optimalsize}
plots $f(\lambda)$ as a function of $1/\lambda$ with different shape
parameters in a semilog format. 
Figure\,\ref{fig:optimalsize} illustrates that
increasing shape parameter $\alpha$ (less variation) results in a
decreased critical data rate $\lambda_c$ (increasing $T_c= 1/\lambda_c$),
indicating that the smaller the data size variation,
the larger the incentive region. 
It is worth noting that there exists an inherent 
tradeoff between buffering and responsiveness:
the no-buffer scheme achieves real-time responsiveness
at the expense of power consumption, while 
the power-aware buffering to some extent can save
power consumption, but at the price of reduced responsiveness.  

\begin{figure}[tb]
\centerline{\psfig{file=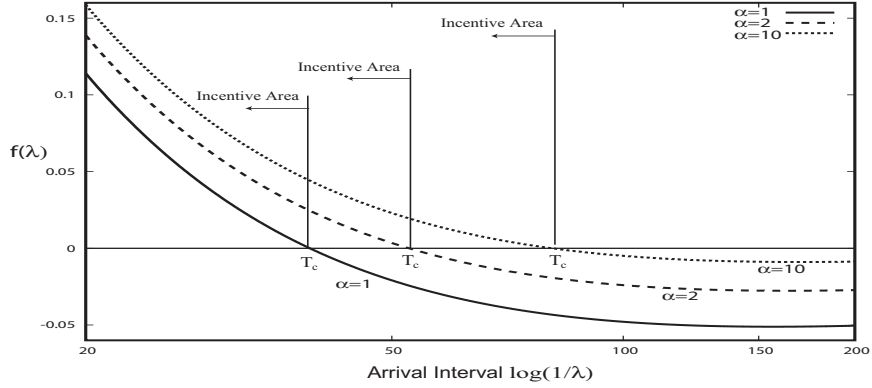,height=2in,width=4.5in}}
\caption{Incentive area vs. arrival interval:
$\left[p_m^{idle}(0.409\mu W),b_{size}(256b),e^{wu}_m(80\mu J),\mu_y(64b)\right]$} 
\label{fig:optimalsize}
\end{figure}

\subsection{Comparison between Fixed-size and 
Fixed-interval Buffering Schemes}

One question arises naturally: which buffering scheme is more power
efficient? the power-aware {\em fixed-size} scheme or the
power-aware {\em fixed-interval} one. While in general there is no
simple answer to this question, there is a definite answer under some
special circumstances. We begin with an easy lemma as below. 

\begin{lemma} \label{lem:compare}
Let $f(x)\!=\!\sqrt{ax}-\!\sqrt{ax+b}\!+\!c, x\geq 0$,
where $a,b$, and $c$ are positive, and
$\sqrt{b}<c$. Then $f(x)\!>\!0$ for $x\!\geq\!0$. 
\end{lemma}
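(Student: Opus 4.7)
The plan is to rewrite $f(x)$ as $c$ minus the gap $\sqrt{ax+b}-\sqrt{ax}$ and then bound that gap uniformly in $x$. Concretely, I would first observe
\[
f(x) \;=\; c \;-\; \bigl(\sqrt{ax+b}-\sqrt{ax}\bigr),
\]
so the statement $f(x)>0$ is equivalent to the inequality $\sqrt{ax+b}-\sqrt{ax} < c$ for every $x\ge 0$.

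Next I would rationalize the gap by multiplying and dividing by the conjugate:
\[
\sqrt{ax+b}-\sqrt{ax} \;=\; \frac{(ax+b)-ax}{\sqrt{ax+b}+\sqrt{ax}} \;=\; \frac{b}{\sqrt{ax+b}+\sqrt{ax}}.
\]
Since $a,b>0$, the denominator $\sqrt{ax+b}+\sqrt{ax}$ is strictly increasing in $x$ on $[0,\infty)$, so the whole expression is a (strictly) decreasing function of $x$. Hence it attains its supremum at the left endpoint $x=0$, where it equals $b/\sqrt{b}=\sqrt{b}$.

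Combining these observations gives $\sqrt{ax+b}-\sqrt{ax}\le \sqrt{b}$ for all $x\ge 0$, with equality only at $x=0$. The hypothesis $\sqrt{b}<c$ then yields
\[
\sqrt{ax+b}-\sqrt{ax} \;\le\; \sqrt{b} \;<\; c,
\]
so $f(x)=c-(\sqrt{ax+b}-\sqrt{ax})>0$ for every $x\ge 0$, which is what was claimed. There is really no serious obstacle here; the only subtlety worth flagging is the need for the strict inequality at the endpoint $x=0$, which is precisely what the hypothesis $\sqrt{b}<c$ (rather than $\le$) is put in for.
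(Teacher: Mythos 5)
Your proof is correct, and it reaches the same structural conclusion as the paper --- the gap $\sqrt{ax+b}-\sqrt{ax}$ is largest at $x=0$, where it equals $\sqrt{b}<c$ --- but by a different technical route. The paper differentiates $f$ directly, obtaining
\begin{align*}
f^\prime(x)= \dfrac{a\bigl(\sqrt{ax+b}-\sqrt{ax}\bigr)}{2\sqrt{ax(ax+b)}}>0 \quad (x>0),
\end{align*}
concludes that $f$ is monotonically increasing, and evaluates $f(0)=c-\sqrt{b}>0$. You instead rationalize the gap as $b/\bigl(\sqrt{ax+b}+\sqrt{ax}\bigr)$ and observe that the conjugate denominator grows with $x$, so the gap is decreasing with supremum $\sqrt{b}$ at the endpoint. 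Your version is entirely algebraic, avoids the calculus step, and gives the explicit uniform bound $\sqrt{ax+b}-\sqrt{ax}\le\sqrt{b}$ rather than merely a sign of a derivative; it also handles the endpoint $x=0$ cleanly, whereas the paper's derivative expression is singular there and the monotonicity argument really only applies for $x>0$ (the endpoint must be, and is, checked separately via $f(0)=c-\sqrt{b}$). Both arguments are short and both are valid; yours is marginally more self-contained.
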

\begin{proof}
Take derivation of $f(x)$, we obtain
\begin{align} \label{equ:moni}
f^\prime(x)= \dfrac{a(\sqrt{ax+b}-\sqrt{ax})}{2\sqrt{ax(ax+b)}}>0, x
> 0
\end{align}
This means that $f(x)$ is monotonically increasing for $x\!\geq\!0$.
Then $f(0)\!=\!\min\limits_{x \in
(0,\infty)}\!f(x)$. It can be inferred that $f(x)>f(0)>0$ for $x\geq 0$ since 
$f(0)\!=\!-\sqrt{b}+c>0$. $\hfill$
\end{proof}

\begin{theorem} \label{theo:compare}
The power-aware {\em fixed-size} buffering scheme
is more power-efficient than
the {\em fixed-interval} counterpart
when the data size is constant, while both 
the buffering
schemes perform equally well 
when the data size is exponentially distributed.  
\end{theorem}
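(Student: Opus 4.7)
The plan is to compute the difference $\overline{e^{FI}(T^*)}-\overline{e^{FS}(b^*)}$ directly using the closed‑form expressions \eqref{equ:optimal-size} and \eqref{equ:optimal-fixed-interval1010}, then specialize the shape factor $k^{\ast}$ and the coefficient of variation $c_v(y)$ to the two distributions in question. Observe first that the transmission/reception piece $2\lambda\mu_y e^{TX}_w$ and the ``common'' buffering terms $\lambda(2e^{resyn}_m+e^{wu}_w+\mu_y(e^w_m+e^r_m))$ cancel out in the subtraction. What remains is
\[
\overline{e^{FI}(T^{\ast})}-\overline{e^{FS}(b^{\ast})}
=\sqrt{\tfrac{2 p^{idle}_m e^{wu}_w \lambda\mu_y}{b_{size}}}
-\tfrac{p^{idle}_m b^{\ast}}{b_{size}}
-\tfrac{p^{idle}_m\mu_y}{2b_{size}}(c_v^2(y)-1),
\]
with $b^{\ast}=\sqrt{\tfrac{2 b_{size} e^{wu}_w\lambda\mu_y}{p^{idle}_m}+\mu_y^2 k^{\ast}}$. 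Everything thus reduces to evaluating $k^{\ast}$ and $c_v(y)$ for the two cases.

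For the exponential case I would use Corollary~\ref{coro:coro1}: $c_v(y)=1$ and $\gamma(y)=2$, so that $k^{\ast}=\tfrac{5}{4}+\tfrac{1}{12}-\tfrac{4}{3}=0$. Then $b^{\ast}=\sqrt{\tfrac{2 b_{size} e^{wu}_w\lambda\mu_y}{p^{idle}_m}}$, and a short algebraic manipulation shows $\tfrac{p^{idle}_m b^{\ast}}{b_{size}}=\sqrt{\tfrac{2 p^{idle}_m e^{wu}_w\lambda\mu_y}{b_{size}}}$, which cancels the leading square‑root term. Because the $(c_v^2(y)-1)$ factor also vanishes, the whole difference collapses to $0$, proving the equality claim.

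For the constant‑data case, I would take $c_v(y)=0$ and interpret the product $c_v^3(y)\gamma(y)$ in \eqref{equ:xkeysim} as $0$ (this is the natural limit, since the deviations vanish faster than skewness can blow up), so that $k^{\ast}=\tfrac{1}{12}$. Substituting and factoring out $\tfrac{p^{idle}_m}{b_{size}}$, the difference becomes
\[
\tfrac{p^{idle}_m}{b_{size}}\Bigl(\sqrt{A\lambda}-\sqrt{A\lambda+\tfrac{\mu_y^2}{12}}+\tfrac{\mu_y}{2}\Bigr),
\qquad A:=\tfrac{2 b_{size} e^{wu}_w\mu_y}{p^{idle}_m}.
\]
This fits Lemma~\ref{lem:compare} with $a=A$, $b=\mu_y^2/12$, $c=\mu_y/2$, and the hypothesis $\sqrt{b}<c$ holds because $\sqrt{1/12}<1/2$. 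Hence the bracketed quantity is strictly positive for every $\lambda\ge 0$, and $\overline{e^{FI}(T^{\ast})}>\overline{e^{FS}(b^{\ast})}$, establishing the first half of the theorem.

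The main obstacle is conceptual rather than computational: the constant‑data case is a degenerate point of Definition~\ref{def:skewness} (both $\sigma_y$ and $\mu_3$ vanish), so I have to justify using $k^{\ast}=1/12$ — either by taking a limit of low‑variance distributions or by revisiting Theorem~\ref{the:theosimple} and noting that for a deterministic increment the overshoot term drops out but the residual $\tfrac{1}{12}$ stays. Once this is settled, the remaining work is just matching constants and invoking the lemma, so the write‑up should be short.
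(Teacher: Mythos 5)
Your proposal is correct and follows essentially the same route as the paper's proof: form the difference $\overline{e^{FI}(T^*)}-\overline{e^{FS}(b^*)}$, cancel the common transmission and buffering terms, note $c_v(y)=1$, $k^*=0$ kills everything in the exponential case, and apply Lemma~\ref{lem:compare} with the same constants (you merely factor out $p^{idle}_m/b_{size}$ first) in the constant-size case. Your explicit justification of $k^*=1/12$ at the degenerate point $c_v(y)=0$ is a detail the paper simply asserts, so no gap there.
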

\begin{proof}
Define $g(T^*,b^*)$ to denote
the power consumption differential between the
{\it fixed-interval} and
{\it fixed-size} buffering schemes
as follows
\begin{align} \label{equ:compare}
g(T^*,b^*) & = \overline{e^{FI}(T^*)}
- \overline{e^{FS}(b^*)}  =\sqrt{\dfrac{2 \mu_y \lambda p^{idle}_m
e^{wu}_w }{b_{size}}} -\frac{p_m^{idle}}{b_{size}} \left(b^*
+ \dfrac{\mu_y(c^2_v(y)-1)}{2}
\right)
\end{align}

\noindent (I) Constant data size: Since
$\sigma_y\!=\!0$ and $k^*\!=\!1/12$, thus
(\ref{equ:compare}) becomes
\begin{align} \label{equ:compare1}
g(T^*,b^*) = \sqrt{\dfrac{2 \mu_y \lambda p^{idle}_m
e^{wu}_w }{b_{size}}} -\sqrt{\dfrac{2 \mu_y \lambda p^{idle}_m
e^{wu}_w }{b_{size}} +\dfrac{(\mu_y p^{idle}_m)^2}{12 b^2_{size}}}
+\dfrac{p^{idle}_m \mu_y}{2b_{size}}
\end{align}
Let $a\!=\!\frac{2\mu_y p^{idle}_m e^{wu}_w}{b_{size}},\,b=\!\frac{(\mu_y p^{idle}_m)^2}{12 b_{size}^2},
\,c\!=\!\frac{p^{idle}_m \mu_y}{2 b_{size}}$.
Based on Lemma~\ref{lem:compare}, we have
$g(T^*,b^*)> 0$,
{\it i.e.}, $\overline{e^{FI}(T^*)}> \overline{e^{FS}(b^*)}$. \\

\noindent (II) Exponential size distribution:
Since $\sigma_y\!=\!\mu_y,c_v(y)\!=\!1,\gamma(y)\!=\!2$, thus
$k^*\!=\!0$, (\ref{equ:compare}) becomes
\begin{align} \label{equ:compare2}
g(T^*,b^*)\!=\!\sqrt{\dfrac{2 \mu_y \lambda p^{idle}_m
e^{wu}_w}{b_{size}}}\!-\!\sqrt{\dfrac{2 \mu_y \lambda p^{idle}_m
e^{wu}_w }{b_{size}}}=0.
\end{align}
Combining (I)-(II) completes the proof. $\hfill$
\end{proof} 

Consider a hyper-exponential size
distribution: $f_h(y)=\sum_{i=1}^2 \frac{1}{\mu_i}
p_i \exp{(-y/\mu_i)}$ and $\sum_{i=1}^2 p_i\!=\!1$. Letting
$p_1\!=\!p,p_2\!=\!1-p$. 
Since there is no closed-form
expression for $g(T^*,b^*)$, thus
a numerical method is used 
to compute different values for $c_v(y)$ and
$\gamma(y)$. This is achieved by varying the value of $p$
while maintaining $\mu_y$
constant. Figure\,\ref{fig:size-interval} presents $g(T^*,b^*)$ as a
function of $\frac{1}{\lambda}$ with different $c_v(y)$ and
$\gamma(y)$. It shows that $c_v(y)$ has a substantial impact on
$g(T^*,b^*)$, whereas $\gamma(y)$ plays a marginal
role. For example, when $c_v(y)\!=\!1.72$ and $\gamma(y)=2.72$,
$g(T^*,b^*)\!=\!\overline{e^{FI}(T^*)}\!-\!\overline{e^{FS}(b^*)}\!<\!0$. 

\begin{figure}[htb]
\centerline{\psfig{file=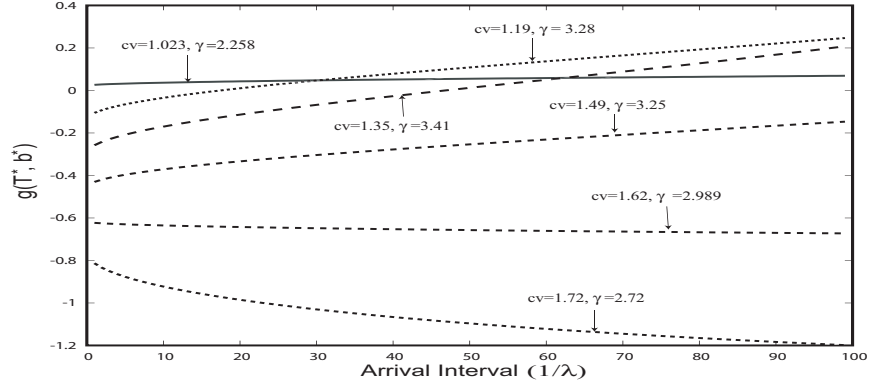,height=2in,width=4.5in}}
\caption{$g(T^*,b^*)$ vs. $1/\lambda$:
$\left[p_m^{idle}(0.409\mu W),e^{wu}_m(80\mu J),b_{size}(256b)\right]$}
\label{fig:size-interval}
\end{figure}

Combining the above 
analysis and numerical calculation leads to the
conclusion that 
the size variation-induced effect is a non-negligible
role in determining the relative advantages of the {\it
fixed-interval} and {\it fixed-size} buffering schemes: when the size
distribution is of low-variance, the {\it fixed-size} buffering scheme
outperforms the {\it fixed-interval} counterpart. When the
size distribution is of high-variance, the {\it fixed-interval}
buffering scheme is more energy-efficient than the {\it
fixed-size} one. Between these two extremes in size variability,
the relative power efficiency of these two 
buffering schemes 
depends on data arrival rate $\lambda$.

To illustrate the efficacy of the power-aware
fixed-size buffering scheme over a power-oblivious one,
we define a function $g^{FS}(b,b^*)$ as follows
\begin{align} \label{equ:general}
 g^{FS}(b,b^*) = \overline{e^{FS}(b)}-
\overline{e^{FS}(b^*)} =(b^* -b)\left(\frac{\lambda e^{wu}_w \mu_y +\dfrac{p^{idle}_m \mu_y^2 k^*}
{2 b_{size}}}{b \cdot b^*} -
\frac{p^{idle}_m}{2 b_{size}}\right ),
\end{align}
where $b^*$ denotes the optimal buffer size (see (\ref{equ:op-fixed-size})), 
and $b$ is  arbitrarily chosen ($b \in {\cal R}$ and $ b\not= b^*$). 
In practical terms, the amount of power consumed 
in the idle-mode memory banks is 
quantized into discrete levels as $
p_m^{idle}\!\left \lceil \frac{b}{b_{size}} \right \rceil$, 
where $\lceil \rceil$ is the ceiling function. 
This implies that only certain discrete 
power states are allowed. For example, if $b_{size}\!=\!128$,
then two memory banks will be used when the
buffer size $b$ falls within the range of $[128,255\,]$.

To visualize this quantization impact, 
we examine the function 
$g^{FS}(b,b^*)$ when $k^*=0$ (an exponential
size distribution). Figures\,\ref{fig:256}-\ref{fig:1280} plot
$g^{FS}(b,b^*)$ as a function of $1/\lambda$ under different values
of $\mu$ (mean data size). The y-axis in the top graph in
Figure\,\ref{fig:256} represents $g^{FS}(256,b^*)$ in the
unit of ($\mu W$). The y-axis in the bottom graph in
Figure\,\ref{fig:256} denotes the optimal buffer size $b^*$ in a
multiple of $b_{size}$. The x-axis
refers to the mean arrival interval $1/\lambda$ in a $\log$
scale.

With increases in $\frac{1}{\lambda}$, the optimal buffer size $b^*$
decreases. The curve of $g^{FS}(256,b^*)$ gradually declines as
$\frac{1}{\lambda}$ increases, as shown in Figure\,\ref{fig:256}. In
contrast, in Figure\,\ref{fig:1280} with increases in
$\frac{1}{\lambda}$, the differential gain $g^{FS}(1280,b^*)$
initially monotonically decreases and then increases after reaching
the lowest point at which the buffer size $b$ is optimal at
$10b_{size}=1280$. Observe that the
memory-bank-size quantized effect produces a stair-like relation
between the amount of power consumed and $\frac{1}{\lambda}$. For
example, the optimal buffer size is $6b_{size}$ 
when $\frac{1}{\lambda} \in [5.435,7.825)$. The larger the
arrival interval $\frac{1}{\lambda}$, the more pronounced (a
wider stair space) the quantized effect.

\begin{figure}[ht]
\centerline{\psfig{file=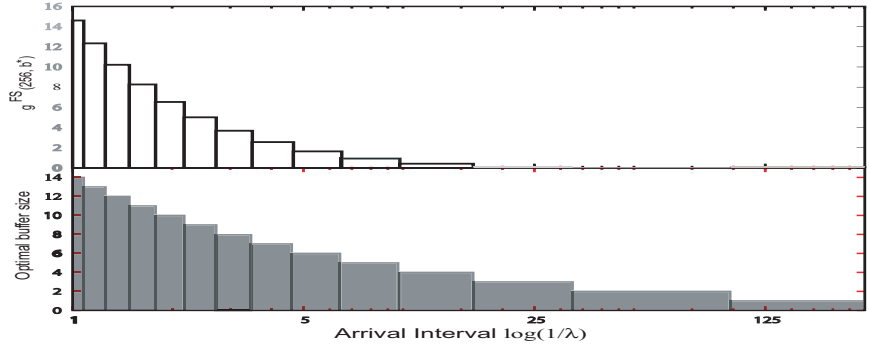,height=2in,width=4.5in}}
\caption{Top: $g^{FS}(256,b^*)$ vs arrival rate $\log(1/\lambda)$
Bottom: optimal buffer size $b^*$ vs. $\log(1/\lambda)$
$\left[b_{size}(128b),\mu_y(64b),e^{wu}_w(80\mu J),p^{idle}_m=0.409 \mu W\right]$} 
\label{fig:256}
\end{figure}
\begin{figure}[hbt]
\centerline{\psfig{file=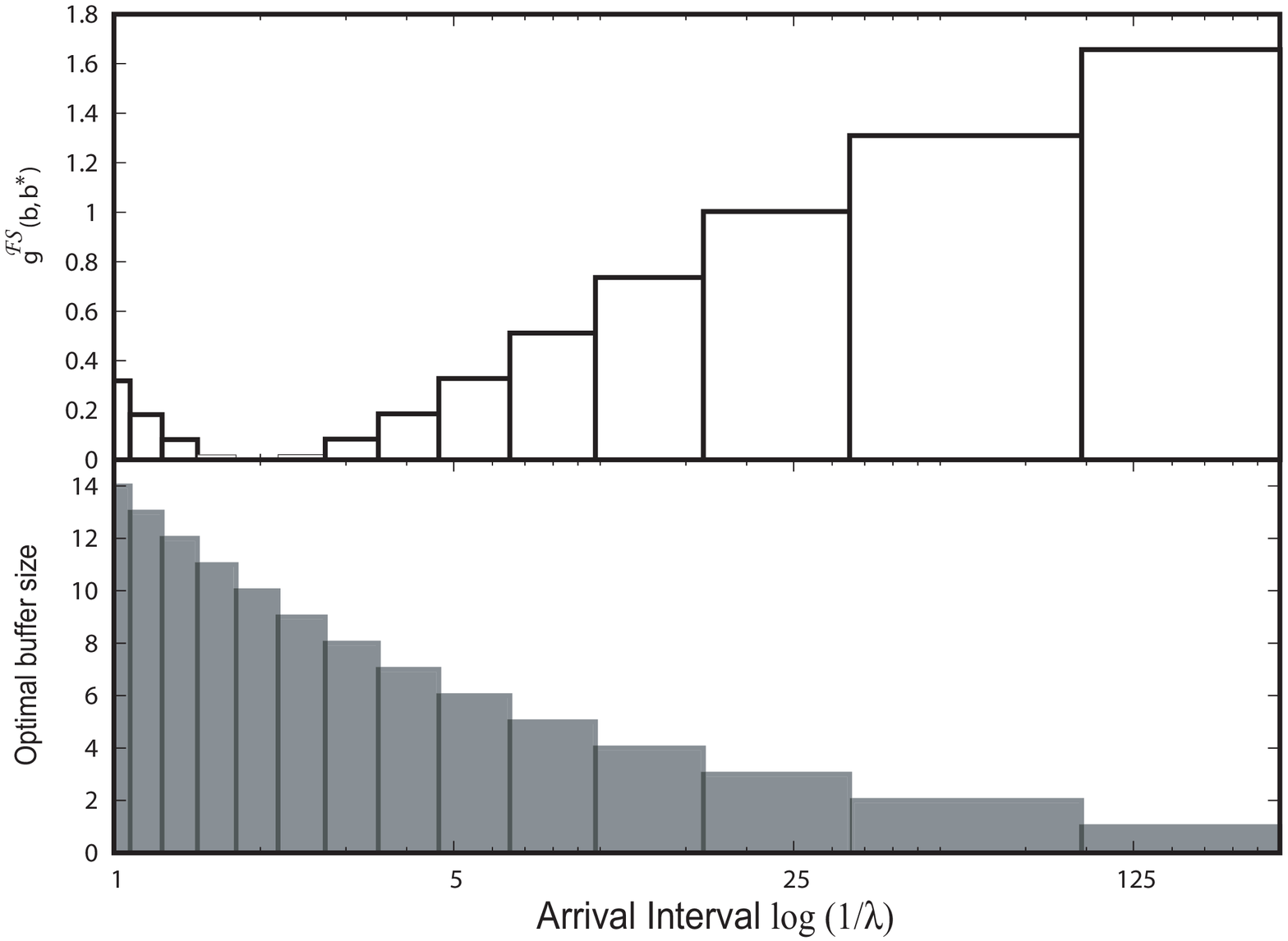,height=2in,width=4.5in}}
\caption{Top: $g^{FS}(1280,b^*)$ vs. $\log(1/\lambda)$.
Bottom: optimal buffer size $b^*$ vs. $\log(1/\lambda)$.
$\left[b_{size}(128b),\mu_y(64b),e^{wu}_w(80\mu J),
p^{idle}_m(0.409\mu W)\right]$} 
\label{fig:1280}
\end{figure}

\begin{figure}[hbt]
\centerline{\psfig{file=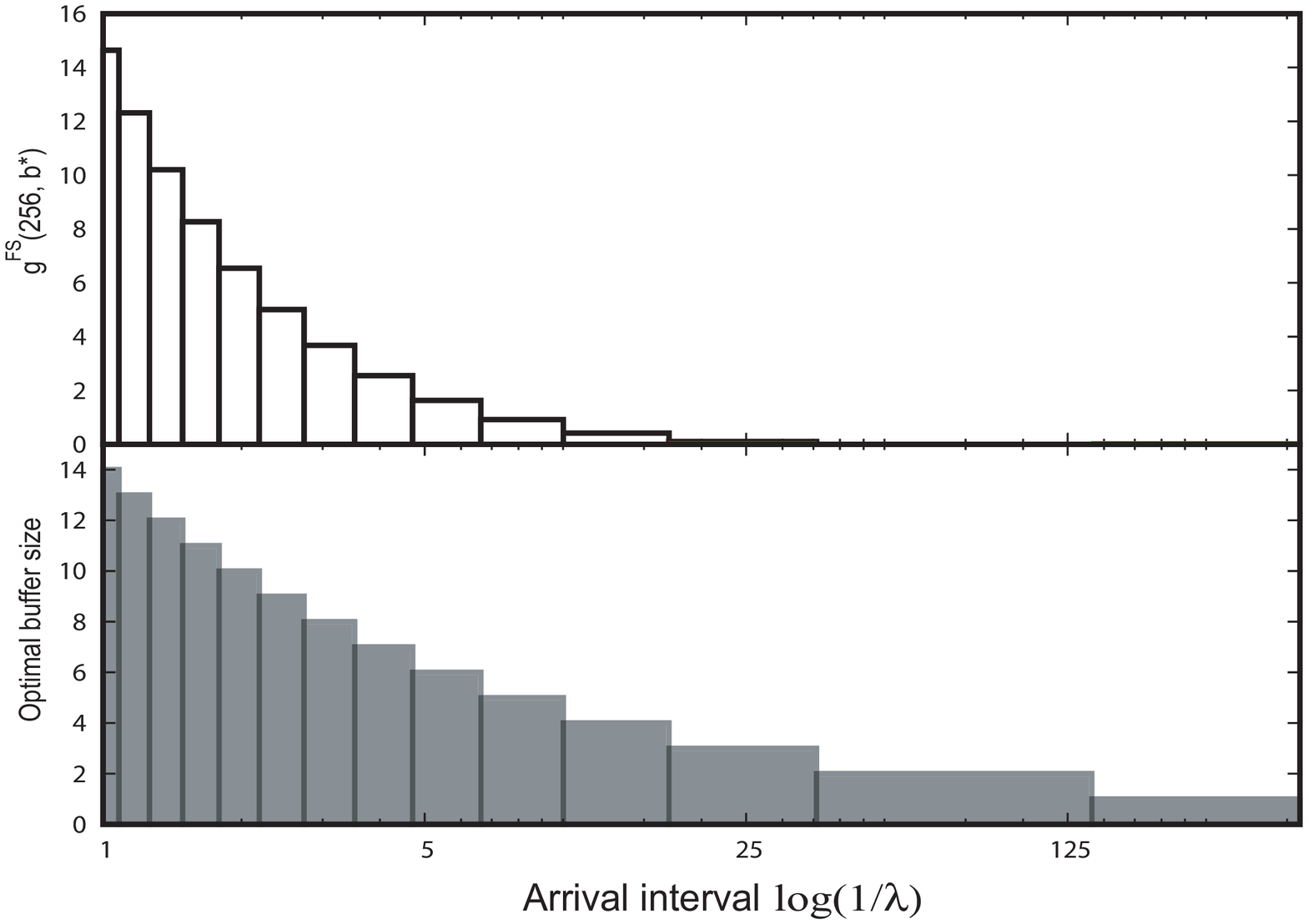,height=2in,width=4.5in}}
\caption{Top: $g^{FS}(256,b^*)$ vs. $\log(1/\lambda)$)
Bottom: optimal buffer size $b^*$ vs. $\log(1/\lambda)$ 
$\left[b_{size}(128b),\mu_y(64b),e^{wu}_w(80\mu J),p^{idle}_m(0.409\mu W),
c_v(y)(1.723),\gamma(y)(2.718)\right]$}
\label{fig:256v}
\end{figure}
\begin{figure}[tbh]
\centerline{\psfig{file=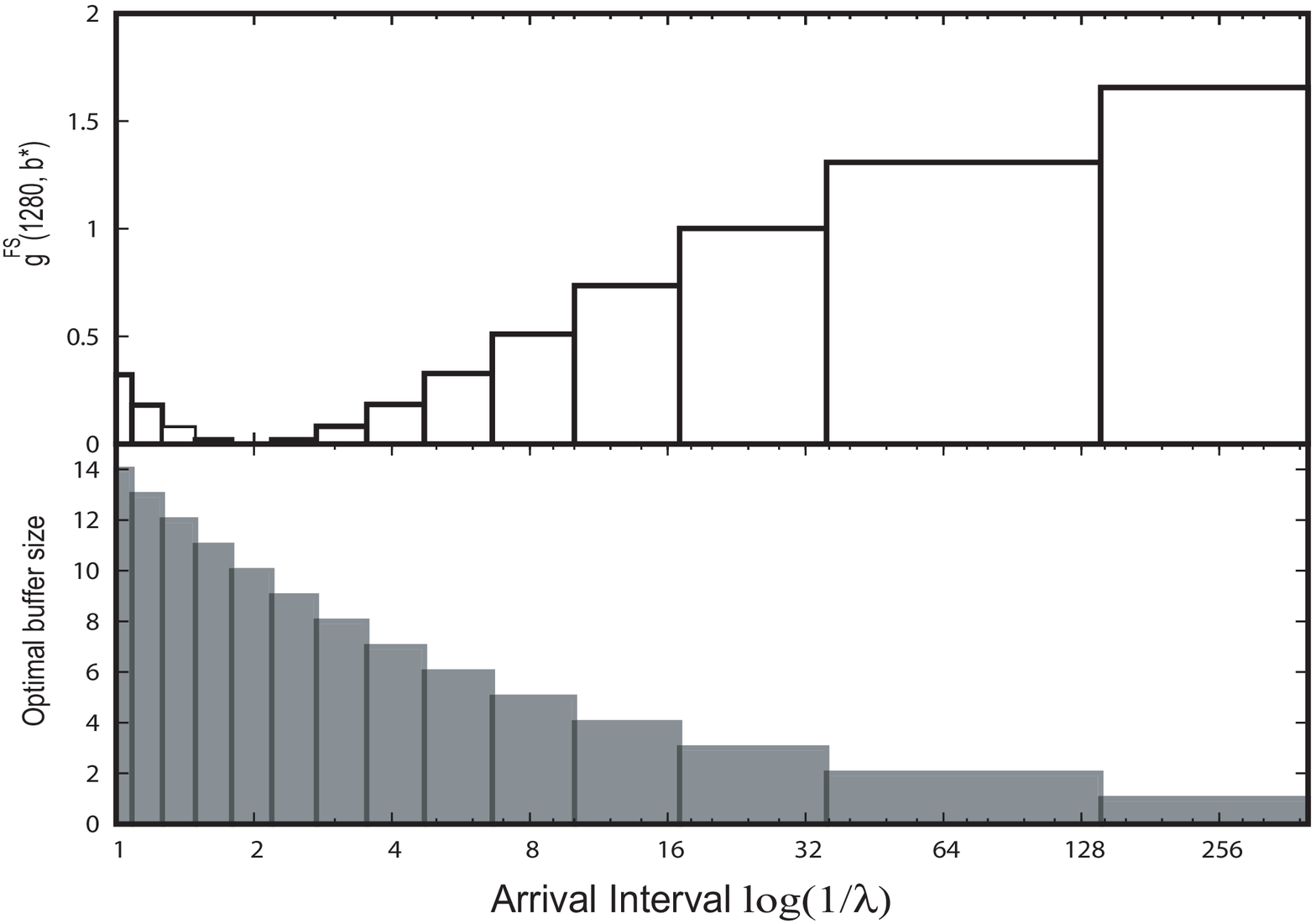,height=2in,width=4.5in}}
\caption{Top: $g^{FS}(1280,b^*)$ vs. $\log(1/\lambda)$
Bottom: optimal buffer size $b^*$ vs. $\log(1/\lambda)$ 
$\left[b_{size}(128b),\mu_y(64b),e^{wu}_w(80\mu J),p^{idle}_m(0.409\mu W),c_v(y)(1.723),\gamma(y)(2.718)\right]$}
\label{fig:1280v}
\end{figure}

To evaluate the size variation-induced effects, 
we calculated $g^{FS}(b,b^*)$ under a
hyperexponentially distributed data size in which $c_v(y)=1.732$
and $\gamma(y)=2.718$ and plot both $g^{FS}(256,b^*)$ and
$g^{FS}(1280,b^*)$ as a function of $\frac{1}{\lambda}$ in
Figures~\ref{fig:256v}-\ref{fig:1280v}, respectively. In comparison
to the absence of size variation as shown in
Figures\,\ref{fig:256}-\ref{fig:1280}, 
the size variation effect
becomes inconsequential when $\lambda$ is high, 
but has a substantial impact when $\lambda$ is low (see Figure\,\ref{fig:tt}).

\begin{figure}[hbt]
\centerline{\psfig{file=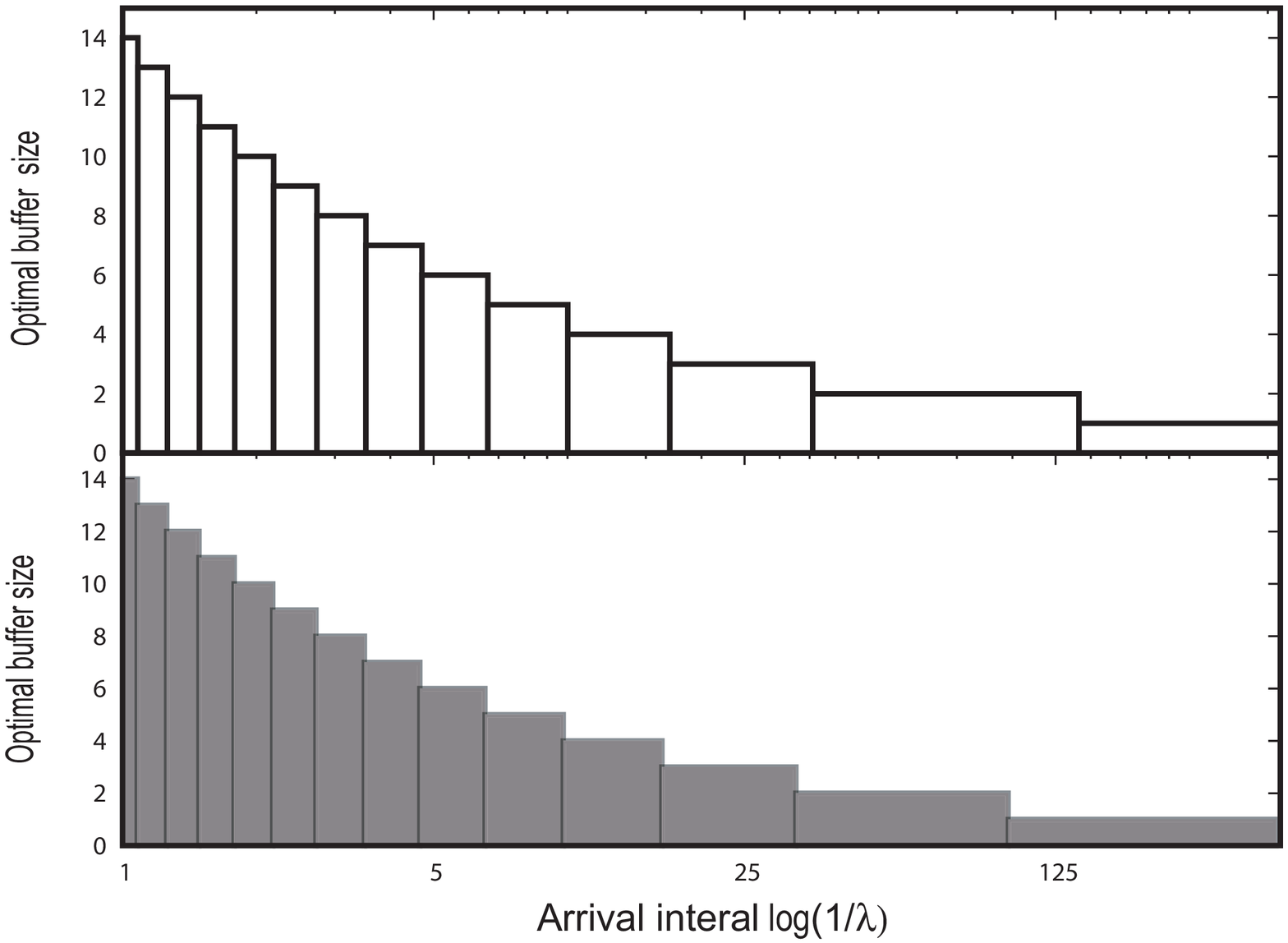,height=2in,width=4.5in}}
\caption{Top: Optimal buffer size $b^*$ vs. 
$\log(\frac{1}{\lambda}):(c_v(y)(1.723),\gamma(y)(2.718))$.
Bottom: Optimal buffer size $b^*$ vs. 
$\log(\frac{1}{\lambda}):(c_v(y)(0))$
$[b_{size}(128b),\mu_y(64b),e^{wu}_w[80\mu J),p^{idle}_m(0.409\mu W)]$} 
\label{fig:tt}
\end{figure}
This phenomenon can be explained by reference to (\ref{equ:tttt5}). 
$\Delta_{v} b$ in (\ref{equ:tttt5}) shows that 
the size variation effect is proportional to
the term $\mu_y^2 k^*$. 
Hence the relative effect of size
variation on the optimal 
buffer size $b^*$ can be expressed as
\begin{align} \label{equ:relativeb}
\frac{\Delta_{v} b}
{\sqrt{\frac{2 \lambda e^{wu} w b_{size} \mu_y}
{p^{idle}_m}}}
= \sqrt{1+ \frac{\mu_y p^{idle}_m k^*}{2 \lambda e^{wu}_m b_{size}}}-1
\end{align}

(\ref{equ:relativeb}) indicates that for a 
given size distribution
(given $c_v(y)$ and $\gamma(y)$, thereby $k^*$), 
the size
variability effect
becomes prominent for low duty-cycle sensor nodes as
(\ref{equ:relativeb}) increases with decreasing $\lambda$ (low duty cycle). 

\section{Effect of power-aware buffering schemes on life span}

In this section we will use a concrete example to quantify
the benefits of power-aware schemes in terms of the 
lifespan extension.

\begin{table}[htb]
\centering
\begin{tabular}{|c|c|c|c|c|} \hline
$e^{wu}_w$ & $p^{idle}_m$ & $e^{TX}_w$ &
$e^{r}_m+e^{w}_m$ & $e^{resyn}_m$ \\ \hline 
$80~\mu J $ & $0.409~\mu W$ & $8.976~\mu J /byte$
& $36\cdot  10^{-3} \mu J /byte $ & $0.912~\mu J$  \\ \hline
\end{tabular}
\caption{Power parameters of radio and memory bank}
\label{tab:tab10}
\end{table}

Power parameters in Table~\ref{tab:tab10} are either directly
obtained or indirectly derived from 
the literature. 
In CC2420-802.15.4 radio specification 
[\citeNP{CC2420}], the transmission power 
is $-25\,dBm$ with a data rate of $250$ kbps, and the current draw is
$8.5\,mA\,(3.3V)$. The $250$ kbps 
is the optimal rate in an ideal environment, which 
may not make any practical sense.  
In practical terms, the rate is assumed to be
$25$ kbps. Consequently, 
the one-byte transmission 
energy is calculated as $e^{TX}_w=\frac{8.5\times 3.3\times
8}{25}=8.976 \mu J/byte$. The power consumption for an idle-mode SRAM memory bank is
$p^{idle}_m=0.409 \mu W$ [\citeNP{Hempstead2005}]. 
Due to unavailability of actual power data of SRAM bank in the literature, the  
$e^r_m, e^w_m, e^{resyn}_m$ are approximated 
by using Rambus DRAM power data. A read or write operation on
Rambus DRAM takes $60 ns$ and consumes $300 mW$,
{\it i.e.}, $e^r_m=e^w_m=18\cdot 10^{-3} \mu J/byte$. A
transition from the powerdown mode of a Rambus DRAM to the active
mode consumes $152 mW$ and takes $6000 ns$ [\citeNP{Fan2001}]. Thus
a resynchronization energy is
$e^{resyn}_m =0.912\,\mu J$.
A radio wakeup energy is assumed to be $e^{wu}_w=80\,\mu J$
as it is not found in literature.
Let the supply voltage be $3.3\,V$, and
the lifespan of two AA batteries be $2700\,mA h$ 
[\citeNP{Levis2005}].

Consider a scenario with a Poisson arrival and 
constant data size (implying $k^*=1/12$).
Letting $y_i=\mu_y=b_{size}=128b$.
Based on  Table~\ref{tab:tab10}, 
the comparison results between 
the optimal fixed-size buffering, 
optimal fixed-interval buffering,  
and an power-oblivious buffering with buffer size
of $256$ are tabulated in Table~\ref{tab:tabcomp}.
 
With $\lambda=0.5$, by 
(\ref{equ:op-fixed-size}),
the optimal buffer size is calculated as
$b^*=1790.54$(b),
thus 
the number of memory banks involved is $\lceil
\frac{b^*}{b_{size}}\rceil\!=14$. 
By (\ref{equ:optimal-size}),
we obtain $\overline{e^{FS}(b^*)}=1197.660858\,\mu W$.
The amount of current draw is
$\frac{1197.660858}{3.3}=362.927533 \mu A$,
and the lifespan is 
$\frac{3.3*2700000}{1197.660858*24*365}=0.849258$(yr). 
Similarly, 
the power consumption of the power-oblivous 
and optimal fixed-interval schemes
are calculated as 
$\overline{e^{FS}(256)}=1212.357021\mu W$ and 
$\overline{e^{FI}(T^*)}=1197.864140\,\mu W$.
This implies that 
the power-aware fixed-size scheme
outlives the power-oblivious one by $3.75761$ days 
and outlives the optimal fixed-interval one
by $1.262509$ hours.  

Table~\ref{tab:tabcomp} shows that 
the role of the power-aware buffering
becomes more 
prominent when the node operates at a low duty cyle. 
For example, with $\lambda =1$,
the optimal fixed-size buffer scheme 
outlives the power-oblivious one by $2.069278$ days,
when $\lambda=0.1$,  
the optimal fixed-size buffer scheme 
outlives the power-oblivious one by $11.797951$ days.
A side-by-side comparison in Table~\ref{tab:tabcomp} suggests that 
the fixed-size buffering scheme performs slightly better
than the fixed-interval counterpart. 
However, this marginal advantage will be  
disappeared in the presence of data-size 
variations. 

\begin{table}[htb]
\centering
\begin{tabular}{|c||c|c||c|c||c|c||} \hline
$\lambda$ & $\overline{e^{FS}(b^*)}$ & 
lifespan &
$\overline{e^{FI}(T^*)}$ & lifespan & 
 $\overline{e^{FS}(256)}$ 
& lifespan  \\
$(1/s)$ & $(\mu W)$ & (yr) &
$(\mu W)$ & (yr) &  $(\mu W)$
& (yr) \\ \hline
$1$ & 
$2392.1739$ &
$0.4252$ & $2392.3775$ & 
$0.4252$ & $2424.5010$ & $0.4195$ \\ \hline
$0.9$ & 
$2153.3299$ & $0.4723$
& $2153.5336$ & $0.4723$ 
& $2182.0722$ & $0.4661$ \\ \hline
$0.8$ & $1914.4623$ & 
$0.5313$ & $1914.6659$ & $0.5312$ & 
$1939.6434$ & $0.5244$ \\ \hline
$0.7$ & $1675.5663$ & 
$0.6070$ & $1675.7698$ & 
$0.6069$ & $1697.2146$ & $0.5993$ \\ \hline
$0.6$ & $1436.6355$ & $0.7080$ & 
$1436.8389$ & $0.7079$ & $1454.7858$ 
& $0.6992$ \\ \hline
$0.5$ & $1197.6609$ & $0.8493$ & 
$1197.8641$ & $0.8491$ & $1212.3570$ 
& $0.8390$ \\ \hline
$0.4$ & $958.6283$ & $1.0610$ & 
$958.8314$ & $1.0608$ & $969.9282$
 & $1.0487$ \\ \hline
$0.3$ & $719.5143$ & $1.4136$ & 
$719.7172$ & $1.4132$ & $727.4994$ & 
$1.3981$ \\ \hline
$0.2$ & $480.2728$ & $2.1178$ &
 $480.4753$ & $2.1169$ & $485.0706$ & 
$2.0969$ \\ \hline
$0.1$ & $240.7851$ & $4.2242$ & 
$240.9869$ & $4.2207$ & $242.6418$ 
& $4.1919$ \\ \hline
\end{tabular}
\caption{Performance comparison}
\label{tab:tabcomp}
\end{table}

\begin{figure}[bht]
\centerline{\psfig{file=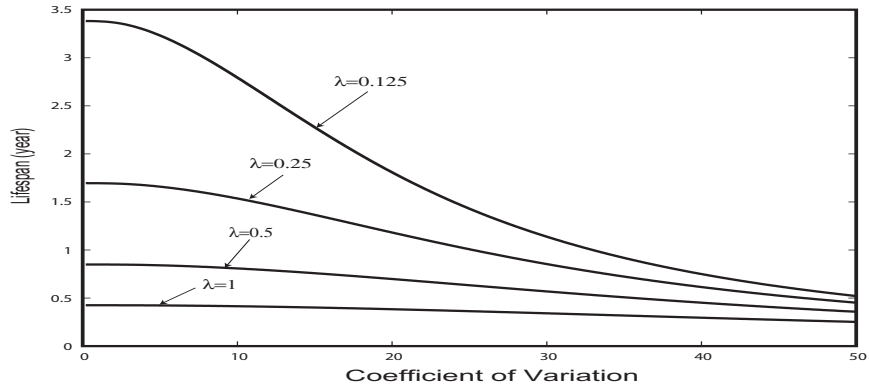,height=2in,width=4.5in}}
\caption{Life span vs $c_v(y)$
$[b_{size}(128b),\mu_y(128b),e^{wu}_w(80\mu J),p^{idle}_m(0.409\mu W)]$}
\label{fig:year}
\end{figure}

To see the size variation effect 
on the fixed-size buffering scheme, 
we consider a
hypothetical symmetrical size 
distribution ($\gamma(y)=0$). Thus
the optimal buffer size $b^*$ is expressed as
\begin{align} \label{equ:optimalll}
b^*=
\sqrt{\frac{2 \lambda e^{wu}_m 
\mu_y b_{size}}{p^{idle}_m}+ 
\mu_y^2 (\frac{5 c^4_v(y)}{4} + \frac{1}{12})}
\end{align}
Based on \eqref{equ:optimal-size} and  Table~2,
Figure\,\ref{fig:year} plots the lifespan as a function of $c_v(y)$
under different data arrival rates 
($\lambda$). One can see
a monotonic decline in the lifespan with increasing $c_v(y)$. This
trend indicates that for
the {\it fixed-size} buffering scheme, high
data size variation has a detrimental effect that
further depletes battery.
By contrast, 
the fixed-interval buffering scheme 
has an obvious advantage of 
being eminently immune to data size variation:
its power consumption
is only associated to mean data size $\mu_y$,
and is independent of data size variance. 
To illustrate, under two power parameter settings,
the lifespan, optimal buffer interval $T^*$, and 
the power consumption under different rates are given in Table~\ref{tab:tab5}.
\begin{table}[htb]
\centering
\begin{tabular}{|c|c|c|c||c|c|c|} \hline
$\lambda (1/s)$& $T^* (s)$& 
$\overline{e^{FI}(T^*)} (\mu W)$& lifespan (yr) & $T^*$ (s)& 
$\overline{e^{FI}(T^*)} (\mu W)$& lifespan (yr) \\ \hline
$1$ & $19.79$ & $2392.38$ & $0.425$ & $12.65$ & $3230.78$ & $0.315$ \\ \hline
$0.5$ & $13.99$ & $4780.02$ & $0.213$ & $8.94$ & $6387.46$ & $0.159$ \\ \hline
$0.25$ & $9.89$ & $9553.33$ & $0.107$ & $6.32$ & $12670.13$ & $0.08$ \\ \hline
$0.125$ & $6.99$ & $19097.18$ & $0.053$  & $4.47$ & $25192.07$ & $0.04$ \\ \hline
$0.0625$ & $4.94$ & $38180.97$ & $0.027$ & $3.16$ & $50174.57$ & $0.02$ \\ \hline
power& 
\multicolumn{3} {c||}{$e^{wu}_w(80\,\mu J)$, $p^{idle}_m(0.409\,\mu W)$} 
& \multicolumn{3} {|c|}{$e^{wu}_w(800\,\mu J)$, $p^{idle}_m(10\,\mu W)$} \\ 
parameters& \multicolumn{3} {c||}{$e^{TX}_w(8.976\,\mu J/byte)$} 
& \multicolumn{3} {|c|}{$e^{TX}_w(8.976\,\mu J/byte)$} \\ \cline{2-7}
& \multicolumn{6} {c|}{
$b_{size}(128b), \mu_y(128b), e^r_m+e^w_m(36 \cdot 10^{-3}\mu J/byte)$}\\ \hline
\end{tabular}
\caption{Lifespan under data arrival rates for fixed-interval
buffering
scheme} \label{tab:tab5}
\end{table}

For environmental monitoring, a sensor-based network
forms a data collection tree.
Each node gathers
local information, and forwards the 
data from its child
nodes to its routing parent
(see Figure\,\ref{fig:childparent}).
The sink node then collates the received
information into global environmental data.
Below we establish an energy
consumption relationship between a routing 
parent node and its child nodes in 
the context of fixed-interval buffering scheme. 

\begin{figure}[tbh] 
\centerline{\psfig{file=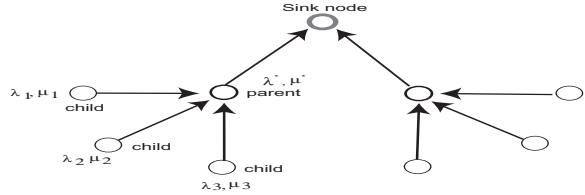,height=1in,width=3in}}
\caption{Data collection tree: bandwidth conservation between parent 
and child nodes}
\label{fig:childparent}
\end{figure}

\begin{theorem} \label{theo:relay}
Suppose a parent node has $k$ child nodes in a static data collection tree.
Let the {\it i}th child have Poisson arrival with a rate of $\lambda_i$ and a mean size of
$\mu_i$. If the parent and child nodes 
adopt
the fixed-interval buffering scheme, then
the optimal buffer interval $T_p^*$ for the parent node is
\begin{align} \label{equ:parentnode}
T^*_p = \sqrt{\frac{2 e^{wu}_w b_{size}}{p^{idle}_m
\sum\limits_{i=1}^k \lambda_i \mu_i}}.
\end{align}
\end{theorem}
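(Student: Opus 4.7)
The plan is to repeat the renewal-reward derivation used in the proof of Theorem~\ref{the:the2}, but with the single-source Poisson arrival replaced by the aggregated traffic that the parent node receives from its $k$ children. The central observation is that the expression for long-run average energy only ever probes the \emph{first moment} of the byte-arrival process at the parent, and by bandwidth conservation this moment is $\rho \stackrel{\text{def}}{=} \sum_{i=1}^{k}\lambda_i\mu_i$ regardless of how each child individually buffers its own stream.

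First, I would write the analog of (\ref{equ:c1}) at the parent: with $n_p(T_p)$ the number of packets delivered to the parent in a cycle of length $T_p$, $s_j$ their arrival epochs and $Y_j$ their sizes,
\begin{align*}
e^{FI}_p(T_p) &= e^{wu}_w + \frac{p^{idle}_m}{b_{size}}\sum_{j=1}^{n_p(T_p)}(T_p-s_j)Y_j + \sum_{j=1}^{n_p(T_p)}\bigl(2e^{TX}_w+e^{w}_m+e^{r}_m\bigr)Y_j \\
&\quad + n_p(T_p)\,e^{wu}_w + 2e^{resyn}_m .
\end{align*}
Second, I would compute expectations. Each child $i$, over its own renewal cycle, forwards exactly what it has received, so Wald's equation (applied to child $i$'s fixed-interval cycle) yields a long-run byte output rate of $\lambda_i\mu_i$; hence $E\!\left[\sum_{j=1}^{n_p(T_p)} Y_j\right] = \rho T_p$, and by Fubini applied to the stationary byte-arrival process seen by the parent,
\begin{equation*}
E\!\left[\sum_{j=1}^{n_p(T_p)}(T_p-s_j)Y_j\right] = \int_0^{T_p}(T_p-t)\,\rho\,dt = \frac{\rho\,T_p^2}{2},
\end{equation*}
the exact analog of (\ref{equ:equ10}) with $\lambda\mu_y$ replaced by $\rho$. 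The remaining per-byte transmission/memory terms contribute a term linear in $T_p$, and the wake-up term for reception is $T_p(\sum_i \lambda_i)\,e^{wu}_w$; neither depends on $T_p$ after division.

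Third, dividing by $T_p$ produces
\begin{equation*}
\overline{e^{FI}_p(T_p)} = \frac{e^{wu}_w}{T_p} + \frac{p^{idle}_m\,\rho\,T_p}{2 b_{size}} + C,
\end{equation*}
where $C$ is independent of $T_p$. This is structurally identical to (\ref{equ:qqqq}), so setting $\partial\overline{e^{FI}_p}/\partial T_p=0$ yields $T_p^{*}=\sqrt{2 e^{wu}_w b_{size}/(p^{idle}_m\rho)}$, which is precisely (\ref{equ:parentnode}); the second derivative is positive, confirming a minimum.

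The main obstacle is justifying the idle-mode expectation in the second step, because the stream arriving at the parent is not Poisson but a superposition of periodic bursts (one per child's fixed-interval schedule). The clean way around this is to invoke the long-run strong law on each child's cumulative output, which makes $\rho$ the asymptotic byte-rate seen by the parent; since the renewal-reward theorem already averages over a long horizon, only this first-order rate enters the formula, and nothing in the derivation requires the arrival process at the parent to be Poisson.
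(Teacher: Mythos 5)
Your proposal is correct and follows essentially the same route as the paper's proof: a renewal--reward decomposition at the parent in which the idle-buffering term contributes $p^{idle}_m\,\rho\,T_p^2/(2 b_{size})$ with $\rho=\sum_{i=1}^{k}\lambda_i\mu_i$, all remaining terms are linear in $T_p$ or constant, and minimizing $e^{wu}_w/T_p + p^{idle}_m\rho T_p/(2b_{size})$ yields (\ref{equ:parentnode}). The only divergences are that the paper counts one reception wake-up per child \emph{batch} (rate $1/T_i$, with $T_i$ each child's own optimal interval) rather than per raw packet (rate $\lambda_i$), and it justifies the quadratic idle term by sandwiching the periodic batch epochs between two Riemann-type sums rather than by your stationarity/Fubini argument; both differences affect only the $T_p$-independent constant and the bookkeeping, not the optimizer $T_p^*$.
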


\begin{proof} It follows from 
\eqref{equ:op-fixed-interval} that
for the {\it i}th child node, 
its optimal interval is
$T_i = \sqrt{\frac{2 e^{wu}_w b_{size}}{p^{idle}_m \lambda_i
\mu_i}}$. This means that successive 
transmission from the {\it i}th 
node to its parent node is equally spaced 
by an interval $T_i$, with the
mean size of $\lambda_i T_i \mu_i$. 
Assume that each node has the radio-triggered 
wakeup capability, thereby incurring no listening power consumption.  

Let $T$ be the length
of a renewal cycle of the parent node.
Each child node
independently transmits the buffered data at
the rate of $1/T_i$. With respect to the {\it i}th child, 
the corresponding energy consumed by the parent 
node, denoted by $e^{FI}(T)(i)$, 
can be decomposed into three pieces: 

1) The energy for data buffering (idle-mode)
at the parent node, denoted by $e^{FI}_b(T)(i)$,
is bounded as:
\begin{align} \label{equ:parentbuffer}
\sum_{k=1}^{\lfloor \frac{T}{T_i}\rfloor} \dfrac{p_m^{idle}(T-k
T_i) y_i(k)}{b_{size}} \leq e^{FI}_b(T) \leq
\sum_{k=0}^{\lfloor \frac{T}{T_i}\rfloor} 
\dfrac{p_m^{idle}(T-k T_i) y_i(k)}{b_{size}},
\end{align}
where $y_i(k)$ is the size of {\it k}th data
sent by the {\it i} child node, and
$E[y_i(k)]=\lambda_i T_i \mu_i$. 
\begin{align}
 \sum_{k=1}^{\lfloor \frac{T}{T_i}\rfloor} \dfrac{p_m^{idle}(T-k
T_i) \lambda_i T_i \mu_i}{b_{size}} \leq E[e^{FI}_b(T)] \leq
 \sum_{k=0}^{\lfloor \frac{T}{T_i}\rfloor} \dfrac{p_m^{idle}(T-k
T_i) \lambda_i T_i \mu_i}{b_{size}}.
\end{align}
This yields $E[e^{FI}_b(T)(i)] 
\approx \frac{p_m^{idle} T^2}{2T_i}\frac{\lambda_i T_i
\mu_i}{b_{size}} = \frac{p_m^{idle}T^2 \lambda_i \mu_i}{2b_{size}}$. \\

2) The energy for data transmission and 
for reading/writing data from/into memory banks, plus 
elevating/demoting the power status 
of memory banks,
is 
\begin{align}
& e^{FI}_t(i)= \sum_{k=0}^{n_i(T)} 
y_i(k) (e^{TX}_w+ e^w_m + e^r_m)+2e^{resyn}_m,
\\ \nonumber
& E[e^{FI}_t(i)]= T \left(\lambda_i \mu_i
 (e^{TX}_w +e^w_m+ e^r_m)+2e^{resyn}_m/T_i\right), 
\end{align}
where $n_i(T)$ is the number of 
transmissions by the {\it i}th node over 
$T$ interval, 
and its expectation $E[n_i(T)]$ is $T/T_i$. \\

3) The energy for data reception ($e^{FI}_r(i)$) 
from the {\it i}th child is 
\begin{align}
e^{FI}_r(i)=
\sum_{k=0}^{n_i(T)} y_i(k) e^{RX}_w + 
e^{wu}_w,  \ \ \
E[e^{FI}_r(i)]=
T \lambda_i \mu_i e^{RX}_w + \frac{ T e^{wu}_w}{T_i}, 
\end{align}
where $n_i(T)$ denotes the
number of data receptions 
at the parent over $T$, so that the number of 
radio-wakeups
(by a radio-triggered wakeup mechanism) is $T/T_i$, and  
the expected energy in radio wakeup over 
$T$ is $\frac{T e^{wu}_w}{T_i}$.
The energy for data reception
is $T\lambda_i \mu_i e^{RX}_w \approx T\lambda_i 
\mu_i e^{TX}_w$, 
assuming that $e^{RX}_w \approx e^{TX}_w$.

Thus the average total 
energy of the parent node with
$k$ child nodes over $T$ is:
\begin{align} \label{equ:parentc1}
E[e^{FI}(T)] = e^{wu}_w + \dfrac{p^{idle}_m T^2}{2 b_{size}}
\sum_{i=1}^{k}\lambda_i \mu_i 
+ T \sum\limits_{i=1}\limits^k 
\left (\lambda_i \mu_i (2e^{TX}_w+
e^w_m +e^r_m)+ \frac{2 e^{resyn}_m + e^{wu}_w }{T_i}\right)
\end{align}
Notice that only one radio wakeup 
for data transmission and 
$\sum_{i=1}^{k} T/T_i$ radio wakeups 
for data receptions from $k$ child nodes
in each cycle $T$.
Using the same trick 
in proof of Theorem~\ref{the:the2}, the
long-run mean average energy is 
\begin{align} \label{equ:balance}
\overline{e^{FI}(T)} = \frac{e^{wu}_w}{T} + 
\dfrac{T
p^{idle}_m}{2 b_{size}}
\sum_{i=1}^{k}\lambda_i \mu_i 
+ 
\sum_{i=1}^k \left (\lambda_i \mu_i 
(2e^{TX}_w+
e^w_m +e^r_m)+ \frac{2 e^{resyn}_m+e^{wu}_w}{T_i}\right).
\end{align}
Taking derivative of (\ref{equ:balance}) w.r.t. $T$ gives
$\dfrac{ \partial \overline{e^{FI}(T)}}{\partial T} =
-\dfrac{e^{wu}_w}{T^2}+ \dfrac{\lambda p^{idle}_m}{2b_{size}}
\sum\limits_{i=1}\limits^{k} \lambda_i \mu_i$.
Resolving $\frac{\partial \overline{e^{FI}(T_p)}}{\partial T_p} =0$
yields (\ref{equ:parentnode}). $\hfill$
\end{proof} 

Let $v=(\lambda_1\mu_1,\cdots,\lambda_k\mu_k)$ be a
bandwidth distribution vector of $k$ child nodes,
and $\overline{e^{FI}(T_p)}(v)$ refer to the long-run mean average energy consumption
of the parent node under $v$.
Substituting
\eqref{equ:parentnode} into \eqref{equ:balance} gives
\begin{align} \label{equ:optimal-fixed-interval1}
\overline{e^{FI}(T_p)}(v) &= \sqrt{\dfrac{2 p^{idle}_m e^{wu}_w
\sum\limits_{i=1}\limits^k \lambda_i \mu_i}{b_{size}}} 
+(2e^{TX}_w +e^w_m+e^r_m)\sum\limits_{i=1}\limits^{k} 
\lambda_i
\mu_i \nonumber \\  
& + (e^{resyn}_m+\frac{e^{wu}_w}{2}) \sqrt{\frac{2 p^{idle}_m}{e^{wu}_w b_{size}}}
\sum\limits_{i=1}\limits^{k} \sqrt{\lambda_i \mu_i},
\end{align}
where
$\sum\limits_{i=1}^k \lambda_i \mu_i$
refers to the total bandwidth of the parent node. \\

{\it A natural question arises how bandwidth distribution among
child nodes affects the overall power consumption of the parent
node.} \\

To answer this question, we first introduce the notion of {\it
majorization}, and then provide a lemma to facilitate necessary
derivations.

For any vector $x\!=\!(x_1,\cdots,x_n)\in {\cal R}^n$, let $x_{(1)}\!\leq\!\cdots
\leq x_{(n)}$ be the component of $x$ in ascending order, 
and $x_{\downarrow}\!=\!(x_{(1)},\cdots,x_{(n)})$ be
the ascending rearrangement of $x$. 

\begin{definition} \label{def:maj}
For two vectors $x,y \in {\cal R}^n$,
\begin{align}
x \prec y \  \  \  \mbox{if} \ \
\begin{cases}
\sum\limits_{i=1}^k x_{(i)} \geq \sum\limits_{i=1}^k y_{(i)}, &
1\leq k \leq n-1 \\
\sum\limits_{i=1}^n x_{(i)}=\sum\limits_{i=1}^n y_{(i)} &
\end{cases}
\end{align}
Then $x$ is said to be {\it majorized} by $y$  [\citeNP{Marshall1979}].
\end{definition}

A trivial example below is given to illustrate the notion of {\it
majorization}:
\begin{align}
(\frac{1}{n},\cdots,\frac{1}{n}) 
\prec(\frac{1}{n-1},\cdots,\frac{1}{n-1},0) \prec \cdots \prec(\frac{1}{2},\frac{1}{2},0,\cdots,0) \prec(1,0,\cdots,0)
\nonumber
\end{align}

\begin{lemma} \label{lem:lemma-maj}
Let $x=(x_1,\cdots x_n),y=(y_1,\cdots,y_n)$ be two vectors and $g$
be a concave function. If $x \prec y$, then
$\sum\limits_{i=1}^n g(y_i) < \sum\limits_{i=1}^n g(x_i)$.
\end{lemma}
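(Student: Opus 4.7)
The plan is to prove this by the standard route that establishes Schur-concavity of symmetric sums $\sum g(x_i)$ from pointwise concavity of $g$. First I would remark that the ascending-order definition in Definition~\ref{def:maj} is equivalent to the classical descending-order definition of majorization: if we set $x_{[i]}=x_{(n-i+1)}$ and use the equality $\sum_{i=1}^n x_{(i)}=\sum_{i=1}^n y_{(i)}$, the inequalities $\sum_{i=1}^k x_{(i)}\geq\sum_{i=1}^k y_{(i)}$ for $1\leq k\leq n-1$ become $\sum_{i=1}^j x_{[i]}\leq\sum_{i=1}^j y_{[i]}$ for $1\leq j\leq n-1$. This is the usual majorization ordering, so standard tools apply.

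Next I would invoke the Hardy--Littlewood--P\'olya representation theorem: $x\prec y$ if and only if $x$ can be obtained from $y$ by a finite sequence of \emph{Robin Hood transfers} (T-transforms), each of which takes two coordinates $y_i<y_j$ and replaces them by $y_i+\epsilon$ and $y_j-\epsilon$ for some $\epsilon\in(0,(y_j-y_i)/2]$, leaving all other coordinates untouched. By straightforward induction on the length of such a sequence, it suffices to prove the inequality for a single T-transform, i.e., to show that replacing $(y_i,y_j)$ by $(y_i+\epsilon,y_j-\epsilon)$ weakly increases $\sum_\ell g(x_\ell)$.

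For the one-step inequality, the key observation is that the two new values are convex combinations of the old ones with complementary weights: writing $\lambda=\epsilon/(y_j-y_i)\in(0,1/2]$, we have $y_i+\epsilon=(1-\lambda)y_i+\lambda y_j$ and $y_j-\epsilon=\lambda y_i+(1-\lambda)y_j$. Concavity of $g$ then gives
\begin{align*}
g(y_i+\epsilon)+g(y_j-\epsilon) &\geq \bigl((1-\lambda)g(y_i)+\lambda g(y_j)\bigr)+\bigl(\lambda g(y_i)+(1-\lambda)g(y_j)\bigr)\\
&= g(y_i)+g(y_j),
\end{align*}
and the inequality is strict whenever $g$ is strictly concave and $y_i\neq y_j$.

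The main obstacle, and essentially the only non-mechanical point, is establishing the representation theorem (the T-transform decomposition of majorization), which I would cite from Marshall and Olkin rather than reprove. A minor subtlety is that the lemma asserts strict inequality; to obtain strictness one needs $x\neq y$ together with strict concavity of $g$, since any chain of T-transforms connecting $x$ to $y\neq y$ must contain at least one genuinely non-trivial step where the above convex combination argument is strict. With that caveat the proof is complete.
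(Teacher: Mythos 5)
The paper gives no proof of this lemma---it simply defers to Marshall and Olkin---and your T-transform (Robin Hood transfer) argument is exactly the standard proof from that reference: reduce via the Hardy--Littlewood--P\'olya decomposition to a single transfer, then apply concavity to the pair of complementary convex combinations. Your proof is correct, and your caveat about strictness is well taken: as stated the lemma really only yields $\sum g(y_i)\leq\sum g(x_i)$ unless $g$ is strictly concave and $x$ is not a permutation of $y$, conditions that do hold in the paper's application (Theorem~\ref{theo:maj}), where $g(t)=\sqrt{t}$ and the bandwidth vectors are assumed distinct in the majorization order.
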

The proof can be seen in [\citeNP{Marshall1979}] and therefore is
omitted. 

\begin{theorem} \label{theo:maj}
Let $v\!=\!(\lambda_1\mu_1,\cdots,\lambda_k \mu_k)$
and $v^\prime\!=\!(\lambda_1^\prime \mu_1^\prime,\cdots,\lambda_k^\prime
\mu_k^\prime)$ be two child bandwidth distribution vectors.
Letting $\sum_{i=1}^k \lambda_i \mu_i = \sum_{i=1}^k
\lambda_i^\prime \mu_i^\prime={\cal B}$. If $v$ is majorized
by $v^\prime$ ($v\!\prec\!v^\prime$), then
the parent node consumes more power
under $v$ than under $v^\prime$,
with the same optimal buffer interval $T_p$.  Namely,
$\overline{e^{FI}(T_p)}(v)> \overline{e^{FI}(T_p)}(v^\prime)$. 
\end{theorem}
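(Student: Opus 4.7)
The plan is to exploit the closed form for $\overline{e^{FI}(T_p)}(v)$ in \eqref{equ:optimal-fixed-interval1} and to isolate exactly one term that is sensitive to how the total bandwidth $\mathcal{B}$ is distributed among the $k$ children. Concavity of the square root together with Lemma~\ref{lem:lemma-maj} should then finish the argument in a single line.

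First, I observe that by \eqref{equ:parentnode} the optimal buffer interval $T_p^*$ depends on the child distribution $v$ only through $\sum_{i=1}^k \lambda_i \mu_i$. Since $v$ and $v'$ satisfy $\sum_{i=1}^k \lambda_i \mu_i = \sum_{i=1}^k \lambda_i^\prime \mu_i^\prime = \mathcal{B}$, the two distributions induce the \emph{same} optimal interval $T_p^*$, which justifies the statement ``with the same optimal buffer interval $T_p$'' in the theorem. Next, inspecting the three summands in \eqref{equ:optimal-fixed-interval1}, the first summand $\sqrt{2 p_m^{idle} e_w^{wu} \mathcal{B}/b_{size}}$ and the second summand $(2e^{TX}_w + e^w_m + e^r_m)\mathcal{B}$ both depend on $v$ only through $\mathcal{B}$, hence take the same value under $v$ and under $v'$. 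Therefore
\begin{align*}
\overline{e^{FI}(T_p)}(v) - \overline{e^{FI}(T_p)}(v^\prime)
= \Bigl(e^{resyn}_m + \tfrac{e^{wu}_w}{2}\Bigr)\sqrt{\tfrac{2 p^{idle}_m}{e^{wu}_w b_{size}}}
\left(\sum_{i=1}^k \sqrt{\lambda_i \mu_i} - \sum_{i=1}^k \sqrt{\lambda_i^\prime \mu_i^\prime}\right).
\end{align*}

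To finish, I need to show the parenthetical difference is strictly positive. Here I invoke Lemma~\ref{lem:lemma-maj} with the concave function $g(x)=\sqrt{x}$: because $v \prec v^\prime$, the lemma yields $\sum_{i=1}^k \sqrt{\lambda_i^\prime \mu_i^\prime} < \sum_{i=1}^k \sqrt{\lambda_i \mu_i}$. Since the prefactor $(e^{resyn}_m + e^{wu}_w/2)\sqrt{2 p^{idle}_m/(e^{wu}_w b_{size})}$ is strictly positive, the displayed difference is strictly positive, giving $\overline{e^{FI}(T_p)}(v) > \overline{e^{FI}(T_p)}(v^\prime)$, as claimed.

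I do not expect a serious obstacle here: the only point that needs care is the verification that exactly one of the three terms in \eqref{equ:optimal-fixed-interval1} is sensitive to the bandwidth distribution, the other two being functions of $\mathcal{B}$ alone. Once that bookkeeping is done, the concavity of $\sqrt{\cdot}$ together with the stated majorization lemma discharges the rest. A small subtlety worth flagging in the write-up is strictness: since $v \prec v'$ with $v \ne v'$, one has a strict inequality in Lemma~\ref{lem:lemma-maj} for the strictly concave function $\sqrt{\cdot}$, which is needed to assert the strict ``$>$'' (rather than ``$\ge$'') in the statement.
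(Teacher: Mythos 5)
Your proposal is correct and follows essentially the same route as the paper's own proof: both fix the common optimal interval $T_p$ from \eqref{equ:parentnode}, observe that only the $\sum_i \sqrt{\lambda_i\mu_i}$ term in \eqref{equ:optimal-fixed-interval1} is sensitive to the distribution of $\mathcal{B}$, and invoke Lemma~\ref{lem:lemma-maj} with the concave function $\sqrt{\cdot}$ to obtain the strict inequality \eqref{equ:dif}. Your added remark on strictness (needing $v \ne v'$ and strict concavity) is a useful clarification the paper leaves implicit, but it does not change the argument.
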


\begin{proof}
Since
$\sum_{i=1}^k \lambda_i \mu_i\!=\!\sum_{i=1}^k\lambda_i^\prime \mu_i^\prime\!=\!{\cal B}$,
by (\ref{equ:parentnode}),
the optimal 
interval for the parent node,
$T_p=\sqrt{\frac{2 e^{wu}_w b_{size}}{p^{idle}_m {\cal B}}}$,
is identical under both $v$ and $v^\prime$.
Based on lemma~\ref{lem:lemma-maj} that 
$\sum_{i=1}^k \sqrt{\lambda_i \mu_i} > \sum_{i=1}^k
\sqrt{\lambda_i^\prime \mu_i^\prime}$
since $v \prec v^\prime$, we get
\begin{align} \label{equ:dif}
\overline{e^{FI}(T_p)}(v)-\overline{e^{FI}(T_p)}(v^\prime) 
= (e^{resyn}_m+\frac{e^{wu}_w}{2}) \sqrt{\frac{2p^{idle}_m}{e^{wu}_w b_{size}}}
\left (\sum_{i=1}\limits^{k}
\sqrt{\lambda_i \mu_i}\!-\!\sqrt{\lambda_i^\prime \mu_i^\prime} \right )> 0.
\end{align}
Theorem~\ref{theo:maj} is thus proved. $\hfill$
\end{proof}

Let ${\cal V}_B$ be a {\em convex} space formed by a set of vectors satisfying
$v=(\lambda_1\mu_1,\cdots,\lambda_k\mu_k) \in {\cal V}_B$ iff
$\sum\limits_{i=1}^k \lambda_i \mu_i\!=\!{\cal B}$.
Let
$v_{\leftrightarrow}\!=\!(\frac{{\cal B}}{k},\cdots,\frac{{\cal B}}{k}) \in {\cal V}_B$
and $v_{\updownarrow}\!=\!({\cal B},0,\cdots,0) \in {\cal V}_B$
be two child bandwidth distribution vectors.
The vector $v_{\leftrightarrow}$ represents a
uniform bandwidth distribution in which
each child equally contributes the $\frac{{\cal B}}{k}$ {\em bandwidth}
of the parent, while $v_{\updownarrow}$ is
an extremely uneven bandwidth distribution
where only child constitutes 
the ${\cal B}$ {\em bandwidth} of the
parent node and the remaining children contribute nothing. 
Clearly, for any bandwidth distribution
$v\in {\cal V}_B$, $v_{\leftrightarrow}\!\prec v\!\prec v_{\updownarrow}$. 
It follows from Theorem~\ref{theo:maj}, we have
\begin{align}
\overline{e^{FI}(T_p)}(v_{\leftrightarrow})=\max
\{\overline{e^{FI}(T_p)} (v): v \in {\cal V}_B \} \\
\overline{e^{FI}(T_p)}(v_{\updownarrow}) =
\min \{\overline{e^{FI}(T_p)} (v):  v \in {\cal V}_B \}.
\end{align}
By the aid of \eqref{equ:dif}, 
the power difference
of the parent node under $v_{\leftrightarrow}$ and $v_{\updownarrow}$ is
\begin{align} \label{equ:impactrange}
 \overline{e^{FI}(T_p)}(v_{\leftrightarrow})
-\overline{e^{FI}(T_p)}(v_{\updownarrow}) =
(e^{resyn}_m+\frac{e^{wu}_w}{2}) \sqrt{\frac{2p^{idle}_m {\cal B}}{e^{wu}_w b_{size}}}
(\sqrt{k}-1).
\end{align}

Theorem~\ref{theo:maj} provides a means of
quantifying the impact of uniformity in 
child bandwidth distribution
on power consumption of the parent node.
\eqref{equ:impactrange} in particular gives
the bound on the range of such
bandwidth distribution effect,
which is proportional to the square root of
the number of child nodes. 

\section{Conclusion and Future Work}
The longevity of battery-powered sensor networks is an essential
performance metric of around-clock environmental surveillance and
monitoring. 
This paper focuses on the exploitation of
power-aware buffering schemes to reduce power consumption of
sensor networks based on the radio-triggered 
power management. It shows insofar as that
the power-aware buffering is a non-negligible 
factor that effectively improve the
lifespan of sensor networks, and that 
the power-oblivious buffering is harmful as
it is very likely to result in an 
excessive power consumption. 

An in-depth analysis shows that 
the fixed-size and fixed-interval
buffering schemes differ markedly in relation to 
data size variability. The
power-aware fixed-size buffering scheme is implicated in both the
skewness and coefficient of variation in the data size distribution,
and its performance 
could deteriorate rapidly when the data size
is of high-variance. In contrast, the
hallmark of the fixed-interval buffering scheme is its
immunity to the data-size variation. 
The fixed-interval buffering scheme is 
therefore the buffering choice for its
performance stability in a variety of sensor-based 
application environments. 
Furthermore, in the context of the 
fixed-interval buffering scheme, we
establish the power consumption 
relationship between parent and
child nodes in a static data collection tree in sensor networks.
We show that a uniform
bandwidth distribution among child nodes in fact consumes 
more power of the parent node than an uneven bandwidth distribution.

These findings are valuable in understanding the asymptotic behavior
of the power-aware buffering schemes in the presence of size
variability. They provide
well-informed guidance on determining the optimal buffer size or buffer
interval based on the power parameter of radio and memory banks,
allowing us to judiciously select a buffering scheme that better tailors 
to data arrival rate and data size distribution.

Our future work will focus on 1) validating the 
buffering models in a lab environment, 
including simulation and system implementation;
2) studying the power-aware buffering issue
under real-time constraints. The goal of the new
research avenue is to study strategy that 
can provide optimal 
trade-off between power-aware buffering and 
responsiveness.     

\section{Acknowledgements}
We would like to thank the anonymous reviewers 
for their incredibly insightful critiques that 
help us to significantly improve the quality of the paper. 

\bibliographystyle{acmtrans}
\bibliography{sensor}
\section*{Appendix}
We first introduce the notion of
first ladder height, then present Lai and
Seigmund's remarkable theorem [\citeNP{Lai1977}], which has laid
theoretical basis for quantifying the size variation impact
on the fixed-size buffering scheme.  

\begin{definition}
Let $x_1, x_2, \cdots$ be random variables
following certain distribution $\cal{F}$ with parameter $\theta$, and
$S_n$
be the random walk consisting of
the partial sum $S_n=\sum_{i=1}^n x_i$.
The first time $\tau = inf\{n: S_n >0\}$ that
the random walk is positive is called the
{\it first ladder epoch} and the first positive value
$S_r$ taken by the random walk is
called as the {\it first ladder height}. 
\end{definition}
A detailed derivation of Theorem~\ref{theo:Lai1979}
can be found in [\citeNP{Lai1977};\citeyearNP{Lai1979}]. 
\citeN{Keener1987} later gave
a simplified expression for $k$
in terms of  moments of ladder height variables. 

\begin{theorem} \label{theo:Lai1979}
Let $\{x_i, i \geq 1 \}$
be a random walk with a mean of $\mu_x=E [x_1] > 0$ and finite
variance $\sigma_x^2$.
Let $\tau(b) =
\min \{ n \geq 1: \sum\limits^{n}\limits_1 x_i > b\}$,
$R_b=S_{\tau(b)}-b,M\!=\!\min\limits_{n\geq 0}\!S_n,\tau^+=\!\tau(0)$,
and $H=S_{\tau^+}$.
As $b\!\rightarrow\!\infty$,
the stopping time variance becomes
\begin{align} \label{equ:key}
\sigma^2_{\tau(b)} = \dfrac{b\sigma_x^2}{\mu_x^3} + \dfrac{k}{\mu_x^2} + o(1),
\end{align}
\end{theorem}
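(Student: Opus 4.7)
The plan is to derive the asymptotic expansion of $\sigma^2_{\tau(b)}$ by combining Wald's two identities with renewal theory for the overshoot. The leading-order term will fall out almost mechanically from the second Wald identity, while the constant-order term $k/\mu_x^2$ requires the fluctuation theory of ladder heights (Spitzer-type identities), which is exactly the content later simplified by Keener.

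Concretely, first I would set $Z_n = S_n - n\mu_x$ so that $\{Z_n\}$ is a mean-zero walk with variance $\sigma_x^2$. The defining identity $S_{\tau(b)} = b + R_b$ rearranges to
\[
\tau(b) = \frac{b + R_b - Z_{\tau(b)}}{\mu_x}.
\]
Applying Wald's first identity ($E[Z_{\tau(b)}]=0$) immediately yields $E[\tau(b)] = (b + E[R_b])/\mu_x$. Taking variance and using $\mathrm{Var}(Z_{\tau(b)}) = E[Z_{\tau(b)}^2] = \sigma_x^2 E[\tau(b)]$ (Wald's second identity) gives the exact decomposition
\[
\sigma^2_{\tau(b)} = \frac{b\sigma_x^2}{\mu_x^3} + \frac{\sigma_x^2 E[R_b]}{\mu_x^3} + \frac{\mathrm{Var}(R_b) - 2\,E[R_b\,Z_{\tau(b)}]}{\mu_x^2}.
\]
The first term already matches the claimed leading behavior, so everything else must be absorbed into the $O(1)$ constant $k/\mu_x^2 + o(1)$.

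The second step is to pass to the limit $b \to \infty$ in the remaining terms. By renewal theory applied to the ladder-height process $\{H_1, H_2, \dots\}$, the overshoot $R_b$ converges in distribution (under the standard non-lattice assumption) to $R_\infty$ with density $P(H>r)/E[H]$, whence $E[R_\infty] = E[H^2]/(2E[H])$ and $E[R_\infty^2] = E[H^3]/(3E[H])$. At this point I would invoke Keener's simplified expression to re-express $E[H], E[H^2], E[H^3]$ in terms of the underlying moments of $x_1$, producing a closed form for $k$. The cross-term $E[R_b\, Z_{\tau(b)}]$ is the delicate part: using the identity $Z_{\tau(b)} = b + R_b - \mu_x\tau(b)$ it can be rewritten as $bE[R_b] + E[R_b^2] - \mu_x E[R_b\,\tau(b)]$, and then one decomposes $\tau(b) = b/\mu_x + (\tau(b) - b/\mu_x)$ so that the leading $b$-dependence cancels against $bE[R_b]$, leaving only finite-moment contributions in the limit.

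The main obstacle will be justifying the passage from convergence in distribution of $(\tau(b) - b/\mu_x,\, R_b)$ to convergence of the relevant second moments. This is precisely where nonlinear renewal theory is needed: one must establish uniform integrability of $R_b^2$, of $(\tau(b) - b/\mu_x)^2$, and of their product, so that the $o(1)$ error is genuine and the explicit constant $k$ can be identified. The leading term $b\sigma_x^2/\mu_x^3$ is essentially a Wald-plus-CLT calculation; all of the technical weight of Lai and Siegmund's original argument is concentrated in controlling the $O(1)$ correction rigorously and expressing $k$ through the first few moments of the ladder height $H$, which in turn is what Keener's simplification achieves.
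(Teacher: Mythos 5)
You should first be aware that the paper does not prove this statement at all: Theorem~\ref{theo:Lai1979} is quoted from Lai and Siegmund (with Keener's form of the constant $k$), and the Appendix explicitly defers its derivation to those papers. The paper's own contribution is only the subsequent specialization to positive increments (Theorem~\ref{the:theosimple}), where $H=x$, $EM=0$ and the integral term in $k$ vanishes. So there is no in-paper proof to compare against; the relevant benchmark is Lai and Siegmund's original nonlinear-renewal argument.

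Measured against that, your proposal has the right skeleton but leaves the actual content of the theorem unproved. The exact decomposition
\[
\sigma^2_{\tau(b)} \;=\; \frac{b\sigma_x^2}{\mu_x^3} \;+\; \frac{\sigma_x^2 E[R_b]}{\mu_x^3} \;+\; \frac{\mathrm{Var}(R_b) - 2E\!\left[R_b Z_{\tau(b)}\right]}{\mu_x^2}
\]
is correct (both Wald identities apply because $E[\tau(b)]<\infty$, and $\mathrm{Cov}(R_b,Z_{\tau(b)})=E[R_bZ_{\tau(b)}]$ since $E[Z_{\tau(b)}]=0$), and it cleanly isolates the leading term. But everything the theorem asserts beyond that leading term lives in the remainder, and your sketch defers precisely the steps that constitute the proof: (i) Cauchy--Schwarz gives only $|E[R_bZ_{\tau(b)}]|=O(\sqrt{b})$, so without the uniform-integrability and cross-term analysis you flag as "the main obstacle," your decomposition does not even establish that the remainder is $O(1)$, let alone that it converges; (ii) your outline never explains how the global minimum $M=\min_{n\ge 0}S_n$ enters, yet the constant you must recover contains $-EH^2\,EM/EH$ and $-2\int_0^\infty ER_xP(M\le -x)\,dx$ --- these arise exactly from evaluating the limit of the cross term for a walk whose increments may be negative, and they are invisible in your sketch; (iii) you invoke "Keener's simplified expression" to convert $E[H^j]$ into moments of $x_1$, but that conversion is not generally available --- it is only in the positive-increment case (the paper's Appendix) that $H=x_1$ and the ladder-height moments collapse to raw moments. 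In short, the leading term is a routine Wald computation, but the $O(1)$ constant --- the part of the theorem the paper actually uses in deriving $k^*$ --- is asserted rather than derived, so the proposal is a roadmap to Lai and Siegmund's proof rather than a proof.
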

where $\tau(b)$ refers to the stopping time and
$k$ is the key constant with a
rather complicated expression as follows:
\begin{align} \label{equ:lai}
k=\dfrac{\sigma_x^2 EH^2}{2\mu_x EH}
+\dfrac{3}{4}\left(\dfrac{EH^2}{EH}\right)^2
-\dfrac{2}{3}\dfrac{EH^3}{EH}-\dfrac{EH^2 EM}{EH}
-2\int\limits_0^{\infty} E R_x P(M \leq -x) dx
\end{align}

Based on Theorem~\ref{theo:Lai1979}, we provide a proof of Theorem~1 in this paper 
below. 

\begin{proof}  Consider random
variables with positive increments $\{ x_i > 0, i \geq 0 \}$ because
the received data size is always positive. Let's reexamine the expression for
$k$ in (\ref{equ:lai}) under the positive increment
condition. Define the ladder epochs $\tau_0 = 0$ and $\tau_1^+ = \tau^+$,
and $\tau_{n+1}^+\!=\! inf\{k >\tau_n^+:S_k>S_{\tau_n^+}\}$ for
$n\!\geq\!1$. For $n>0$, the (n+1)th ladder height
$H_{n+1}= S_{\tau_{n+1}^+}\!-\!S_{\tau_{n}^+}$.
Consider $H_{n+1}= S_{\tau_{n+1}^+} -
S_{\tau_{n}^+} =x_{n+1}$ under the positive increment condition,
{\it i.e.}, $H=x$. As a result, we obtain
\begin{align}
& EH=E[x]=\mu_x,
EH^2=E[x^2] = \sigma^2_x + \mu^2_x, \\ \nonumber
& EH^3=E[x^3]=3\sigma^2_x\mu_x +\mu_x^3+\sigma^3_x \gamma(x)
\end{align}
Using the identity $EM\!=\!\dfrac{EH^2}{2 EH}-\dfrac{E
[x^2]}{2E[x]}$ [\citeNP{Woodroofe1976}], 
we obtain that $EM =
0$ when the positive random walk is assumed. Substitution of these
identities into (\ref{equ:lai}) yields a simplified expression for
$k$, denoted by $k^*$, as follows:
\begin{align} \label{equ:final1}
k^*&=
\dfrac{\sigma^2_x (\sigma^2_x +\mu^2_x)}{2 \mu_x^2}
+\dfrac{3}{4} \left (\dfrac{\sigma^2_x +\mu^2_x}{\mu_x} \right)^2 
-\dfrac{2}{3} \left (
 \dfrac{3 \sigma^2_x \mu_x + \mu_x^3 + \sigma^3_x \gamma(x)}
{\mu_x} \right) \\ \nonumber
&= \dfrac{5 \sigma^4_x}{4 \mu^2_x}+
\dfrac{\mu_x^2}{12}-\dfrac{2 \sigma^3_x \gamma(x)}
{3 \mu_x}= \dfrac{5 \sigma^2_x c^2_v(x)}{4}+
\dfrac{\mu_x^2}{12}-\dfrac{2 \sigma^3_x c_v(x) \gamma(x)}
{3},\
\end{align}
Substitution of (\ref{equ:final1})
into (\ref{equ:key}) completes the proof. $\hfill$
\end{proof} 

\end{document}